\newtheorem{proposition}{Proposition}
\DeclareMathAlphabet{\mathpzc}{OT1}{pzc}{m}{it}
\def\bg{{\mathbf g}}
\def\bu{{\mathbf u}}
\def\bx{{\mathbf x}}
\def\bH{{\mathbf H}}
\def\bI{{\mathbf I}}
\def\b0{{\mathbf 0}}
\title {Multiuser MIMO Downlink Beamforming Design Based on Group Maximum SINR Filtering}
\author{Yu-Han Yang,
Shih-Chun Lin \IEEEmembership{Member, IEEE}, and Hsuan-Jung Su
\IEEEmembership{Member, IEEE}
\thanks{The material in this paper was presented in part at IEEE International Conference on Communications (ICC) 2009, Dresden, Germany, and ICC 2008, Beijing, China.}
\thanks{Yu-Han Yang was with the Graduate Institute of Communication Engineering, National Taiwan University. He is now with the Department of Electrical and Computer Engineering, University of Maryland, College Park, USA (e-mail: yhyang@umd.edu).}
\thanks{Shih-Chun Lin is with the Institute of Communications Engineering, National Tsing Hua University, Hsinchu, Taiwan, 30013 (e-mail: linsc@mx.nthu.edu.tw).}
\thanks{Hsuan-Jung Su is with the Department of Electrical Engineering and Graduate Institute of Communication Engineering, National Taiwan University, Taipei, Taiwan, 10617 (e-mail: hjsu@cc.ee.ntu.edu.tw).}
}
\begin{document}
\date{}
\maketitle
\IEEEpeerreviewmaketitle

\begin{abstract}
In this paper we aim to solve the multiuser multi-input
multi-output (MIMO) downlink beamforming problem where one
multi-antenna base station broadcasts data to many users. Each
user is assigned multiple data streams and has multiple antennas
at its receiver. Efficient solutions to the joint transmit-receive
beamforming and power allocation problem based on iterative
methods are proposed. We adopt the group maximum
signal-to-interference-plus-noise-ratio (SINR) filter bank
(GSINR-FB) as our beamformer which exploits receiver diversity
through cooperation between the data streams of a user. The data
streams for each user are subject to an average SINR constraint,
which has many important applications in wireless communication
systems and serves as a good metric to measure the quality of
service (QoS). The GSINR-FB also optimizes the average SINR of its
output. Based on the GSINR-FB beamformer, we find an SINR
balancing structure for optimal power allocation which simplifies
the complicated power allocation problem to a linear one.
Simulation results verify the superiority of the proposed
algorithms over previous works with approximately the same
complexity.
\end{abstract}

\begin{center}
   {\underline{\bf \small EDICS}} \hspace{3mm} {\small MSP-MULT}
\end{center}

\begin{keywords}
Beamforming, linear precoding, MIMO, multiuser, broadcast channel.
\end{keywords}
\IEEEpeerreviewmaketitle


\section{Introduction}
In this paper, the joint beamforming and power allocation
optimization problem for the multiuser multi-input multi-output
(MIMO) downlink channel is considered. In this system, transmit
and receive beamformings are used to suppress the multiuser
interference and exploit the multi-antenna diversity. Power
allocation at the transmitter is performed to efficiently utilize
the available transmission power. Such a joint beamforming and
power allocation problem has been studied by many researchers
\cite{rashid1998transmit,Bourdoux02,Choi04,Sch04,kha06}. In
\cite{Bourdoux02}\cite{Choi04}, block diagonalization (BD) was
proposed to block-diagonalize the overall channel so that the
multiuser interference at each receiver is thoroughly eliminated.
Such a zero-forcing approach suffers from the noise enhancement
problem, because it removes the multiuser interference by ignoring
the noise. Hence the performance can be improved if the balance
between multiuser interference suppression and noise enhancement
can be found \cite{Sch04}\cite{kha06}.

Under individual signal-to-interference-plus-noise-ratio (SINR)
constraints for users, Schubert and Boche studied the situation
where each user has only one data stream and \emph{single receive
antenna} \cite{Sch04}. It was shown that the optimal solution can
be efficiently found by iterative algorithms. Khachan \emph{et
al.} \cite{kha06} generalized the scheme in \cite{Sch04} to allow
several transmission beams to be grouped to serve a user, and each
user has multiple receiver antennas \cite{kha06}. However, each
data stream is processed separately. Thus, in addition to the
multiuser interference from the other users, there is intra-group
interference between the data streams of a user. This drawback
motivates our work to use a more sophisticated receiver processing
to tackle the intra-group interference.

In this work, we adopt the group maximum SINR filter bank
(GSINR-FB) proposed by \cite{HJSu02} as the beamformer, which
collects the desired signal energy in the streams of each user and
maximize the total SINR at its output. That is, the GSINR-FB lets
these streams cooperate while the filters in \cite{kha06} let them
compete. Based on the GSINR-FB beamformer, we consider a system
which uses the average SINR over data streams for a user as a
metric to measure the quality-of-service (QoS). This criterion is
very useful in many communication scenarios
\cite{JHWang08,Davision2006robust,HJSu02} including the celebrated
space-time block coded systems. It will be shown that the GSINR-FB
based beamformer does improve the performance over the scheme in
\cite{kha06}. Moreover, we find that the SINR balancing structure
exists for this beamforming method, that is, the optimal power
allocation results in the same SINR to target ratio for all users
with the GSINR-FB based beamforming. As will be shown later, this
property makes solving the complicated power allocation problem
much easier. Our work can be seen as a non-trivial generalization
of \cite{Sch04} to the multi-antenna setting which also subsumes
\cite{kha06} as a special case (with independent processing of
data streams). For simplicity, we will first consider group power
allocation which restricts equal power on the data streams of each
user to benefits from the low-complexity power allocation schemes
similar to those in \cite{Sch04}\cite{kha06}. This restriction is
later relaxed by allowing the power of individual data streams to
be adjustable. Besides the GSINR-FB based beamforming, this per
stream power allocation scheme is new compared with
\cite{Sch04}\cite{kha06} and has better performance than the group
power allocation. These two techniques are the key ingredients to
make our performance better than that in \cite{kha06}. With
approximately the same complexity as \cite{kha06}, our approach
exhibits a better performance compared to the existing methods in
\cite{kha06} and the BD based methods.

We will investigate two optimization problems. One is minimizing
the total transmitted power while satisfying a set of average SINR
targets. The other is maximizing the achieved average SINR to
target ratio under a total power constraint.  Based on the
uplink-downlink duality \cite{Tse02}, our methods iteratively
calculate the GSINR-FB based beamforming and power allocation
matrices. The rest of the paper is organized as follows. The
system model and problem formulation are introduced in Section
\ref{chap:ProblemFormulation}. We also briefly discuss the basic
design concept of our iterative algorithms in this section.
Backgrounds such as the GSINR-FB based beamformers and the
applications of the average SINR criterion are provided in Section
\ref{sec:GSINR-FB}. Section \ref{sec:PowerAlloc} presents our
power allocation results. The numerical results are given in
Section \ref{sec:SimulationResult}, and the computational
complexity issues are discussed in Section \ref{sec:Discussions}.
Finally, we give the conclusion in \ref{sec:Conclusion}.

\section{Problem
Formulation and Efficient Iterative
Solutions}\label{chap:ProblemFormulation}
\subsection{Notations}
In this paper, vectors and matrices are denoted in bold-face lower
and upper cases, respectively. For vector $\bg$, $\bg \ge_e 0$
means that every element of $\bg$ is nonnegative. For matrix
$\mathbf{G}$, $\mathrm{trace}(\mathbf{G)}$ denotes the trace;
$\mathbf{G}^{\mathrm{T}}$ and $\mathbf{G}^{H}$ denote the
transpose and Hermitian operations, respectively. $\left\| \cdot
\right\|_F$ denotes the Frobenius norm, which is defined as
$\left\| {\bf{G}} \right\|_F  = \sqrt {{\rm{trace}}\left(
{{\bf{GG}}^H } \right)}$. $\mathbf{G}_s^{-1}$ and $|\mathbf{G}_s|$
are, respectively, the inverse and determinant of a square matrix
$\mathbf{G}_s.$ And $\mathbf{I}_{n}$ denotes the identity matrix
of dimension $n$. A diagonal matrix is denoted $diag \{ \ldots \}$
whose $k$th parameter is the $k$th diagonal term in the matrix.
$E[\cdot]$ denotes the expectation operator.

\subsection{System Model}
\label{sec:SystemModel} Consider the downlink scenario with $K$
users, where a base station is equipped with $M$ antennas. The
upper part of Fig. \ref{fig:DL_UL} shows the overall system block
diagram for user $k$, who has $N_k$ receive antennas and receives
$L_k$ data streams, where $L_k$ satisfies the constraint $L_k \le
\min\left\{M,N_k\right\}$ to make sure effective recovery of the
data streams at the receiver. Thus the $K$ users have a total of
$N=\sum\nolimits_{k = 1}^K {N_k }$ receive antennas receiving a
total of $L=\sum\nolimits_{k = 1}^K {L_k }$ grouped data streams.
For a given symbol time, the data streams intended for user $k$
are denoted by a vector of symbols $\mathbf{x}_k=[x_{k1}, x_{k2},
..., x_{kL_k}]^T$. The $L$ data streams are concatenated in a
vector $\mathbf{x} = [\mathbf{x}^T_1,...,\mathbf{x}^T_K]^T$.
Without loss of generality, we assume that $\bx$ is zero mean with
covariance matrix $\bI_L$. The precoder $\mathbf{U}_k\in
\mathcal{C}^{M\times L_k}$ processes user $k$'s data streams
before they are transmitted over the $M$ antennas. These
individual precoders together form the $M\times L$ global
transmitter beamforming matrix
$\mathbf{U}=[\mathbf{U}_1,\mathbf{U}_2,...,\mathbf{U}_K]$. The
power allocation matrix for user $k$ is a diagonal matrix
\begin{equation}
\mathbf{P}_k = diag\{p_{k1}, p_{k2},..., p_{kL_k}\},
\end{equation}
where $p_{kj}$ is the power allocated to the $j$th data stream of user $k$, and the global power allocation matrix
\begin{equation}
\mathbf{P}=diag\{\mathbf{P}_1, \mathbf{P}_2,..., \mathbf{P}_K\}
\end{equation}
is a block diagonal matrix of dimension $L\times L$. The
transmitter broadcasts signals $\mathbf{U\sqrt{P}x}$ to all of the
$K$ users.

User $k$ receives a length $N_k$ vector
${\bf{y}}_k=\mathbf{H}^H_k\mathbf{U\sqrt{P}x}$, which can be
expanded as
\begin{equation}
 \label{equ:DL_ReceiveVector}
 {\bf{y}}_k = {\bf{H}}_k^H {\bf{U}}_k \sqrt {{\bf{P}}_k } {\bf{x}}_k  +
{\bf{H}}_k^H
 \left( {\sum\limits_{j \ne k, j=1}^K {{\bf{U}}_j \sqrt {{\bf{P}}_j } {\bf{x}}_j } } \right) + {\bf{n}}_k,
\end{equation}
where the channel between the transmitter and user $k$ is
represented by the $N_k\times M$ matrix $\mathbf{H}_k^H$, the
Hermitian of $\mathbf{H}_k$; $\mathbf{n}_k$ represents the
zero-mean additive white Gaussian noise (AWGN) at user $k$'s
receive antennas with variance $\sigma^2$ per antenna and the
covariance matrix
$E[\mathbf{n}_k\mathbf{n}_k^H]=\sigma^2\mathbf{I}_{N_k}$. The
resulting $N\times M$ global channel matrix is $\mathbf{H}^H$,
with $\mathbf{H}=[\mathbf{H}_1,\mathbf{H}_2,...,\mathbf{H}_K]$. We
assume that the transmitter has perfect knowledge of the channel
matrix $\mathbf{H}$, and receiver $k$ knows its $\mathbf{H}_k$
perfectly. The second term on the right-hand-side of
(\ref{equ:DL_ReceiveVector}) is the inter-group multiple user
interference for user $k$. To estimate its $L_k$ symbols
$\mathbf{x}_k$, user $k$ processes $\mathbf{y}_k$ with its
$L_k\times N_k$ receive beamforming matrix $\mathbf{V}_k^H$. The
resulting estimated signal vector is
\begin{align} \notag
 {\mathbf{\hat
x}}_k &={\mathbf{V}}_k^H {\mathbf{H}}_k^H {\mathbf{U\sqrt{P}x}} +
{\mathbf{V}}_k^H {\mathbf{n}}_k \\
 & =  \mathbf{V}_k^H {\bf{H}}_k^H {\bf{U}}_k \sqrt {{\bf{P}}_k } {\bf{x}}_k  + \mathbf{V}_k^H {\bf{H}}_k^H
     \left( {\sum\limits_{j \ne k, j=1}^K {{\bf{U}}_j \sqrt {{\bf{P}}_j } {\bf{x}}_j } }
     \right)+\mathbf{V}_k^H {\bf{n}}_k. \label{equ:DL_AfterV}
\end{align}
Without loss of generality, as \cite{HJSu02}, we assume that the
interference-plus-noise components of the filter bank output in
\eqref{equ:DL_AfterV} are uncorrelated. For any filter bank that
produces correlated components, one can easily find another filter
bank which makes these component uncorrelated but with the same
performance. The details can be found in \cite{HJSu02}.

Finally, owing to the non-cooperative nature between users in
broadcast channels, the global receiver beamforming filter
$\mathbf{V}^H$, formed by collecting the individual receiver
filters, is a block diagonal matrix of dimension $L\times N$ where
$\mathbf{V}=diag\{[\mathbf{V}_1,\mathbf{V}_2,...,\mathbf{V}_K]\}$.

\subsection{Problem Formulation} \label{sec:problem}
In this paper, we consider the average SINR of user $k$ over all
its $L_k$ data streams ${\overline{ \rm SINR}}_k =
\sum_{j=1}^{L_k}{\rm SINR}_{kj}/L_k$ as the performance measure,
where ${\rm SINR}_{kj}$ is the $\rm SINR$ of the $j$th data stream
of user $k$. The importance and applications of this design
criterion will be reviewed in detail later in Section
\ref{secASINRapp}. Based on the average SINR constraints and
system model described in Section \ref{sec:SystemModel}, we
consider two problems as follows.  The first optimization problem,
which will be referred to as Problem Pr in the following sections
is \\ \\ \noindent \textbf{Problem Pr}: Given a total power
constraint $P_{\rm max}$ and the SINR target $\gamma _k$ for user
$k$, maximize $\mathop {\min }\limits_k {\overline{\rm{SINR}}}_k
/\gamma _k$ over all beamformers $\mathbf{U}$, $\mathbf{V}$, and
power allocation matrix $\mathbf{P}$, that is,
\begin{equation}
\label{equ:P1_sumPowerConstraint} \mathop {\max
}\limits_{{\bf{U}},{\bf{V}},{\bf{P}}} \mathop {\min }\limits_k
\frac{{{\overline{\rm{SINR}}}_k }}{{\gamma _k }} {\rm{\ \ subj.\
to\ }} \sum\limits_{k = 1}^K {\sum\limits_{j = 1}^{L_k } {p_{kj} }
}  \leq P_{\max }.
\end{equation}
We call ${\rm{\overline{SINR}}}_k/\gamma_k$ the SINR to target
ratio for user $k$.\\

If the minimum SINR to target ratio in Equation
(\ref{equ:P1_sumPowerConstraint}) can be made greater than or
equal to one, then the second optimization problem is to find the
minimum power required such that the SINR targets can be all
satisfied. The mathematical formulation of this problem, which
will be referred to as Problem Pp in the following sections is
\\ \\ \noindent
\textbf{Problem Pp}: Given a constraint on the minimum SINR to target ratio, minimize
the total transmitted power over all beamformers $\mathbf{U}$,
$\mathbf{V}$, and power allocation matrix $\mathbf{P}$ as
\begin{equation} \label{equ:P2_minPower}
\mathop {\min }\limits_{{\bf{U}},{\bf{V}},{\bf{P}}} \sum\limits_{k
= 1}^K {\sum\limits_{j = 1}^{L_k } {p_{kj} } } {\rm{\ \ subj.\ to\
}}\mathop {{\rm{min}}}\limits_k \frac{{{\rm{\overline{SINR}}}_k
}}{{\gamma _k }} \ge 1 {\;\; \mbox{and} \;\; \sum\limits_{k
= 1}^K {\sum\limits_{j = 1}^{L_k } {p_{kj} } } \leq P_{\max}}. \\
\end{equation}

\subsection{Iterative methods based on uplink-downlink duality}\label{sec:duality}
We briefly review the uplink-downlink duality, which plays an
important role in finding efficient solutions based on iterative
methods for our problems. In \cite{Tse02, Boche03, Schubert02,
Jindal04}, it was shown that it is always possible to find a
virtual uplink system for the downlink system. We plot the virtual
uplink for user $k$ in the lower part of Fig. \ref{fig:DL_UL},
where $\mathbf{Q}_k$ is the corresponding power allocation matrix
in the virtual uplink defined similarly as $\mathbf{P}_k$. To be
more specific, with fixed beamforming filters $\bf U$, $\bf V$,
SINR targets $\gamma _1, \ldots, \gamma _K$, and the same sum power
constraint $P_{\rm max}$ for both the downlink and the virtual
uplink, the downlink and its virtual uplink system have the same
SINR to target ratio with optimal $\bf P$ and $\bf Q$.

With the aids of the uplink-downlink duality, the optimization
problems Pr and Pp in Section \ref{sec:problem} can be solved
efficiently with iterative algorithms. Now we introduce the basic
concepts of these algorithms, as summarized in Table
\ref{tab:AlgBasic}. For simplicity, we use Problem Pr as an
example. From Table \ref{tab:AlgBasic}, for iteration $n$, with
the downlink transmitter and receiver beamformers ${\bf U}^{(n)}$
and ${\bf V}^{(n)}$ fixed, we can obtain a new power allocation
matrix ${\bf P}^{(2n+1)}$ to increase the minimum SINR to target
ratio $\mathop {{\rm{min}}}\limits_k
{\rm{\overline{SINR}}}_k/\gamma_k$. Note that the downlink power
allocation are executed two times (Step 1 and 3) for the $n$th
iteration, as shown in Table \ref{tab:AlgBasic}. To simplify
notations in the following sections, we use ${\bf P}^{(2n+1)}$ and
${\bf P}^{(2n+2)}$ to represent the new power allocation matrices
for the first and second downlink power allocations respectively.
With fixed ${\bf P}^{(2n+1)}$ and ${\bf U}^{(n)}$, we can obtain a
new downlink receiver beamformer ${\bf V}^{(n+1)}$ to increase
${\rm{\overline{SINR}}}_k/\gamma_k$ for all users. The minimum
ratio $\mathop {{\rm{min}}}\limits_k
{\rm{\overline{SINR}}}_k/\gamma_k$ is further optimized using the
new power allocation matrix ${\bf P}^{(2n+2)}$ computed from ${\bf
U}^{(n)}$ and ${\bf V}^{(n+1)}$.  Then we turn to the virtual
uplink to update ${\bf U}^{(n)}$. Similarly, fixing uplink
transmitter beamformer ${\bf V}^{(n+1)}$ and receiver beamformer
${\bf U}^{(n)}$, we obtain a new uplink power allocation matrix
${\bf Q}^{(2n+1)}$. After power allocation, the SINR to target
ratios of the downlink and virtual uplink are equal. Then we can
find ${\bf U}^{(n+1)}$ based on ${\bf Q}^{(2n+1)}$ and ${\bf
V}^{(n+1)}$. After that, ${\bf Q}^{(2n+2)}$ is updated according
to the new ${\bf U}^{(n+1)}$ and ${\bf V}^{(n+1)}$, and so on.

Note that all the iterations are done at the transmitter,
and the transmitter does not need to feed forward the optimized
receive filters to the receivers during the iterations. The
receiver can compute the final filter by itself after the
iterative algorithm stops. This procedure is the same as
\cite[Section II-B]{CaireEstimation}, and we briefly describe it
here. First, as in the ``common training'' phase in
\cite{CaireEstimation}\cite{biguesh2006training}, each receiver
$k$ can estimate its own channel ${\bf{H}}_k$ by using the known
training sequence. After receiver $k$ feeds back ${\bf{H}}_k$ to
the transmitter, the transmitter can iteratively compute transmit
and receive beamforming filters, as well as power allocation
matrices in Table \ref{tab:AlgBasic} according to ${\bf{H}}_k$.
After the iterative algorithm stops, the ``dedicated training'' phase as in
\cite{CaireEstimation} is performed to let the receivers compute
the final receiver filter. In this phase, the transmitter will
broadcast orthogonal training sequences to the receivers as in
\cite{CaireEstimation}, and each receiver can estimate the final
equivalent channel formed by ${\bf{H}}_k$, the transmit filters, and power
allocation matrices to calculate its final receive beamformer. We
will first show how to calculate the beamforming filters in the
next section, and then show how to use these filters to determine
power allocation in Sections \ref{sec:GroupPowerAlloc} and
\ref{sec:PerStreamPowerAlloc}.

\section{ Group Maximum SINR Filter Bank for the Average SINR
Constraint}\label{sec:GSINR-FB} In this section, we
introduce the key motivation of our paper, that is, the use of
GSINR-FB in \cite{HJSu02} as the beamfomer to solve
\eqref{equ:P1_sumPowerConstraint} \eqref{equ:P2_minPower}. This
filter bank is a non-trivial generalization of the one used in
\cite{kha06}. It uses the dimensions provided by the multiple
receive antennas at each user more efficiently than \cite{kha06}.
Specifically, the streams of each user (or group) cooperate with one
another in our scheme, rather than interfere with one another as in
\cite{kha06}. Since this filter bank maximizes the total SINR of the
streams of each user, it also maximizes the average SINR criterion
adopted in this paper. We will also review the applications of
the average SINR criterion at the end of this section.

\subsection{ Group Maximum SINR Filter Bank}\label{chap:MSINR}
To solve \eqref{equ:P1_sumPowerConstraint}
\eqref{equ:P2_minPower}, the GSINR-FB is adopted for our
transmitter beamformer $\bf U$ and receiver beamformer $\bf V$ to
maximize the average SINR. Moreover, as will be shown in
Proposition \ref{Prop_SINR_b}, the optimal SINR balancing
structure based on the GSINR-FB beamforming will make the
corresponding power allocation problem trackable. Let us first focus on Step 2 in Table \ref{tab:AlgBasic}, that
is, given ${\bf U}^{(n)}$ and ${\bf P}^{(2n+1)}$, finding filter
${\bf V}^{(n+1)}$ to maximize
${\sum\limits_{j=1}^{L_k}{\mathrm{SINR}_{kj}^\mathrm{DL}}},
\forall k$ ($L_k$ times of the average SINR), where
${\mathrm{SINR}_{kj}^\mathrm{DL}}$ is the SINR of the $j$th stream
of user $k$ in this step. For brevity, we shall omit the iteration
index $n$ in most of the following equations. Following
\cite{HJSu02}, the optimization problem becomes
\begin{equation} \label{Eq_MSNR}
\mathop {\max } \limits_{\mathbf{V}_k}
{\sum\limits_{j=1}^{L_k}{{\bf{v}}_{kj}^H
{\bf{R}}_{s,k}^{{\rm{DL}}} {\bf{v}}_{kj} },} {\mbox{\ \ subj. to\ \ }}
{{\bf{v}}_{kj}^H {\bf{R}}_{n,k}^{{\rm{DL}}} {\bf{v}}_{kj}=1 }, \;\;
\forall j,
\end{equation}
where $\mathbf{V}_k = [\mathbf{v}_{k1},\ldots,\mathbf{v}_{kL_k}]$,
while
\begin{equation} \label{eq_MSNR_AB}
{\bf{R}}_{s,k}^{{\rm{DL}}} = {\bf{H}}_k^H {\bf{U}}_k {\bf{P}}_k
{\bf{U}}_k^H {\bf{H}}_k \;\; \mbox{and} \;\;
{\bf{R}}_{n,k}^{{\rm{DL}}} = \sum\limits_{i \ne k} {{\bf{H}}_k^H
{\bf{U}}_i {\bf{P}}_i {\bf{U}}_i^H {\bf{H}}_k }  + \sigma ^2
{\bf{I}}_{N_k },
\end{equation}
are the signal covariance matrix and the interference-plus-noise
covariance matrix for \textit{user $k$}, respectively. It is now
evident that we must let $L_k \le \min\left\{M, N_k\right\}$ since
the number of eigenvectors is limited by the dimension of ${\bf
H}_k$. The optimization problem in (\ref{Eq_MSNR}) was shown to be
equivalent to solving the generalized eigenvalue problems
\cite{HJSu02} as
\begin{equation}
\label{equ:DLeigenproblem}
\mathbf{R}_{s,k}^\mathrm{DL}\mathbf{v}_{kj} =
\lambda_{kj}^\mathrm{DL}\mathbf{R}_{n,k}^\mathrm{DL}\mathbf{v}_{kj},
\;\; \forall j
\end{equation}
with
\begin{equation}
\lambda _{kj}^{{\rm{DL}}}  = \frac{{{\bf{v}}_{kj}^H
{\bf{R}}_{s,k}^{{\rm{DL}}} {\bf{v}}_{kj} }}{{{\bf{v}}_{kj}^H
{\bf{R}}_{n,k}^{{\rm{DL}}} {\bf{v}}_{kj} }} =
{\rm{SINR}}_{kj}^{{\rm{DL}}}.
\end{equation}
Then ${\bf V}_k$ can be computed easily. The receive beamforming
filter designed for the downlink can be carried over to the
transmit beamforming filter for uplink, and vice versa. Thus the
receive beamforming filter ${{\bf U}^{(n+1)}}$ for the virtual
uplink system in Step 4 in Table \ref{tab:AlgBasic} can be
computed similarly.

Now we show why the GSNIR-FB performs better than those in
\cite{Sch04}\cite{kha06}. In \cite{kha06}, all streams interfere
with one another and $\mathbf{v}_{kj}$ satisfies
\[
\mathbf{R}_{s,kj}^\mathrm{DL}\mathbf{v}_{kj} =
\lambda_{M,kj}^\mathrm{DL}\mathbf{R}_{n,kj}^\mathrm{DL}\mathbf{v}_{kj},
\]
where $\lambda_{M,kj}^\mathrm{DL}$ is the maximum generalized
eigenvalue of
($\mathbf{R}_{s,kj}^\mathrm{DL},\mathbf{R}_{n,kj}^\mathrm{DL}$);
\begin{equation} \label{eq_Kha_AB}
\mathbf{R}_{s,kj}=\bH^H_k \bu_{kj}\bu^H_{kj} \bH_k \; \mbox{and}
\; \; {\bf{R}}_{n,kj}^{{\rm{DL}}}  = \sum_{\ell=1,\ell \neq
j}^{L_k} p_{k \ell}\bH^H_k \bu_{k\ell}\bu^H_{k\ell}
\bH_k+\sum\limits_{i \ne k} {{\bf{H}}_k^H {\bf{U}}_i {\bf{P}}_i
{\bf{U}}_i^H {\bf{H}}_k } + \sigma ^2 {\bf{I}}_{N_k }
\end{equation}
are the signal covariance matrix and the interference-plus-noise
covariance matrix for \textit{stream $j$ of user $k$},
respectively, and
$\mathbf{U}_k = [\mathbf{u}_{k1},\ldots,\mathbf{u}_{kL_k}]$. Comparing
\eqref{eq_Kha_AB} with \eqref{eq_MSNR_AB}, one can easily see that, in \cite{kha06},
the streams of the same user interfere with one another and there
is additional intra-group interference in
${\bf{R}}_{n,kj}^{{\rm{DL}}}$ (the first term of
${\bf{R}}_{n,kj}^{{\rm{DL}}}$) compared with
${\bf{R}}_{n,k}^{{\rm{DL}}}$ in \eqref{eq_MSNR_AB}. The GSINR-FB
beamforming exploits additional dimensions from the multiple
receiver antennas, which are not provided in \cite{Sch04} (where $N_k$=1), much more efficiently, by letting the streams of each
user cooperate rather than compete as in \cite{kha06}.

\subsection{  Average SINR criterion and its applications}
\label{secASINRapp} The average SINR criterion $\overline{\rm
SINR}_k$ is very useful in many communication systems
\cite{JHWang08,Davision2006robust,HJSu02} and can serve as a good
metric for the QoS. Here we briefly review some of its
applications. Note that in these applications, it is the total
SINR $L_k{\overline{\rm SINR}}_k$ which serves as the performance
metric, which equals to $L_k$ times the average SINR. However,
as will be discussed in Section \ref{sec:SimulationResult}, to have a fair comparison with the results in \cite{kha06} where the per stream
SINR is considered, the average SINR is used in the comparison.
\\

\noindent \textbf{Approximation of maximum achievable rate at low
SINR \cite{JHWang08}:} The maximum achievable rate for user $k$
is
\begin{equation}
\begin{array}{l}
\hspace{4mm}  \sum_{j=1}^{L_k}{\rm log}(1+\frac{{\rm SINR}_{kj}}{\Gamma})\\
=  {\rm log}\prod_{j=1}^{L_k}{(1+\frac{{\rm SINR}_{kj}}{\Gamma})}\\
\approx {\rm log}(1+{L_k\overline{\rm SINR}}_k/\Gamma),
\end{array}
\end{equation}
where $\Gamma$ is the SNR gap to capacity \cite[P.432]{Haykin01}
\cite[Chapter 7]{proakis} due to suboptimal channel coding schemes
and the limitation of circuit implementation in practical systems.
According to \cite[P.432]{Haykin01}, the gap is huge (8.8 dB) for uncoded
PAM or QAM operating at $10^{-6}$ bit error rate. This
approximation is also useful in systems with large numbers of users where the total
interference power in
\eqref{equ:DL_AfterV} is large. \\

\noindent \textbf{Receiver SINR
\cite{Davision2006robust}\cite{HJSu02}:} Assuming that the maximum
ratio combining (MRC) is applied to $\hat{\bx}_k$ in
\eqref{equ:DL_AfterV}, the receiver SINR at the output of the MRC
is the sum of individual SINRs as $L_k{\overline{\rm SINR}}_k$.
This metric is very useful when space-time coding is applied
and ${\bx}_k$ contains the space-time coded symbols. In this case, the decoding is based on the MRC results \cite{HJSu02}. \\

\noindent \textbf{Minimization of the pairwise error probability
\cite{JHWang08}:} When a space-time block code (STBC) is applied
and ${\bx}_k$ contains the STBC symbols. Assuming that the
channel is slow fading and remains constant during the
transmission of a codeword, and that the maximum-likelihood
detector is used at the receiver, one can approximately
transform the minimization of the pairwise codeword error probability
to the maximization of $L_k{\overline{\rm SINR}}_k$ following the steps in
\cite{JHWang08}. This approximation applies to both the orthogonal and quasi-orthogonal STBCs.

\section{Power Allocation}\label{sec:PowerAlloc}
Now we focus on the optimal power allocation strategy for the Step
3 in Table \ref{tab:AlgBasic}, where the maximum SINR beamforming
filter banks ${\bf U}^{(n)}$, ${\bf V}^{(n+1)}$ and a set of SINR
targets $\gamma_1, \ldots, \gamma_K$ are given. The optimization
problem corresponding to Problem Pr
(\ref{equ:P1_sumPowerConstraint}) is
\begin{equation}
\mathop {\max }\limits_{\bf{P}} \mathop {\min }\limits_k
\frac{{{\overline{\rm{SINR}}}_k^{{\rm{DL}}} }}{{\gamma _k }}{\rm{\
\ subj.\ to\ \ }} \sum\limits_{k = 1}^K {\sum\limits_{j = 1}^{L_k
} {p_{kj} } } \leq P_{\max }. \label{equ:sumpowerconstraint}
\end{equation}
The other one corresponding to Problem Pp
(\ref{equ:P2_minPower}) which minimizes the total transmitted
power, such that each individual SINR target can be achieved, is
\begin{equation}
\label{equ:powerMinimizationProblem} \mathop {\min
}\limits_{\bf{P}} \sum\limits_{k = 1}^K {\sum\limits_{j = 1}^{L_k
} {p_{kj} } } {\rm{\ \ subj.\ to\ \ }} \mathop {\min }\limits_k
\frac{{{\overline{\rm{SINR}}}_k^{\rm DL} }}{{\gamma _k }} \ge
1{\rm{,\ }} \;\; \mbox{and} \;\; \sum\limits_{k = 1}^K
{\sum\limits_{j = 1}^{L_k } {p_{kj} } } \leq P_{\max}.
\end{equation}

We will first explore the structure of the optimal
solutions for these problems in Section \ref{sec:SINR_balan}.
However, even with this structure which significantly simplifies
the problems, the two per-steam power allocation problems are
very complicated and the solutions in \cite{Sch04}\cite{kha06} do
not apply. Thus, we first intensionally introduce some restrictions to
the power allocation strategies to simplify the problems and
benefit from the simple power allocation schemes similar to those
in \cite{Sch04}\cite{kha06}. In Section
\ref{sec:SimulationResult}, the simulation results show that even
without the new per stream power allocation, the performance of
\cite{Sch04}\cite{kha06} can be enhanced by simply applying
the GSINR-FB as the beamformers. This verifies our motivation to use
the GSINR-FB. The results for the simple ``grouped'' power
allocation are presented in Section \ref{sec:GroupPowerAlloc}. We
then remove the restrictions and present the general per-stream
power allocation results in Section \ref{sec:PerStreamPowerAlloc}.
The insights to why the proposed algorithms perform better
than those in \cite{Sch04}\cite{kha06} are given in Section
\ref{sec:Insight}.

\subsection{ Optimal SINR balancing structure under GSINR-FB beamforming}
\label{sec:SINR_balan} By carefully rearranging the complicated
$\overline{\rm{SINR}}_k^{\rm DL}$ to a simpler equivalent form and
using the properties of the GSINR-FB, we prove the following
structure for the optimal power allocation which makes solving the
complicated power allocation problems
\eqref{equ:sumpowerconstraint}
\eqref{equ:powerMinimizationProblem} possible.
\\
\begin{proposition} \label{Prop_SINR_b}
For the optimization problem (\ref{equ:sumpowerconstraint}), the
optimal solution $\bf P$ makes all users achieve the same SINR
to target ratio, that is, $ {\overline{\rm SINR}}_k^{\rm DL} /
\gamma_k = {C}^{\rm DL}$, for all $k$. Here ${C}^{\rm DL}$ is the
SINR balanced level.\\
\end{proposition}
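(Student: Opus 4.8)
The plan is to argue by contradiction using a standard monotonicity-and-scaling argument, in the spirit of Schubert--Boche \cite{Sch04} but adapted to the \emph{average} SINR and the GSINR-FB beamformers. Suppose $\mathbf P$ is optimal for \eqref{equ:sumpowerconstraint} but the SINR to target ratios are not all equal; let $C^{\rm DL} = \min_k \overline{\rm SINR}_k^{\rm DL}/\gamma_k$ be the balanced level and let $\calK_{<}$ be the (nonempty, proper) set of users strictly above this level. The idea is to perturb $\mathbf P$ so as to strictly increase the minimum ratio while respecting the sum-power budget, contradicting optimality. To do this cleanly I would first rewrite $\overline{\rm SINR}_k^{\rm DL}$ in a form where its dependence on $\mathbf P$ is transparent: using \eqref{equ:DLeigenproblem}--\eqref{eq_MSNR_AB} and the assumption that the filter-bank outputs are uncorrelated, the total SINR $L_k\,\overline{\rm SINR}_k^{\rm DL} = \sum_j \lambda_{kj}^{\rm DL}$ can be expressed as a ratio involving only $\mathbf H_k$, the fixed beamformers $\mathbf U_k$, $\mathbf V_k$, and the powers $\{p_{ij}\}$, where the ``signal'' part depends only on $\mathbf P_k$ and the ``interference-plus-noise'' part depends on $\{\mathbf P_i\}_{i\ne k}$ and $\sigma^2$. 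This is the ``careful rearrangement'' the text alludes to, and getting an explicit such formula is the technical heart of the argument.

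With that formula in hand, the key structural facts to establish are: (i) for fixed interference, $\overline{\rm SINR}_k^{\rm DL}$ is strictly increasing in the entries of $\mathbf P_k$ (more power to your own streams only helps); (ii) for fixed $\mathbf P_k$, it is non-increasing in each $p_{ij}$ with $i\ne k$ (more interference only hurts); and (iii) a joint homogeneity/scaling property — scaling \emph{all} powers by $t>1$ strictly increases every $\overline{\rm SINR}_k^{\rm DL}$ because the noise term $\sigma^2\mathbf I$ does not scale. Fact (iii) alone already gives the contradiction when the sum-power constraint is slack at the optimum, so the real work is the case where $\sum_{k,j}p_{kj}=P_{\max}$. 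There I would reduce the powers of the streams of the ``rich'' users in $\calK_{<}$ by a small common factor and redistribute the freed-up power to the users achieving $C^{\rm DL}$; by (i) the poor users' ratios strictly increase, and by continuity the rich users' ratios stay above $C^{\rm DL}$ for a small enough perturbation, so $\min_k \overline{\rm SINR}_k^{\rm DL}/\gamma_k$ strictly increases — contradiction. Hence $\calK_{<}=\emptyset$ and all ratios equal $C^{\rm DL}$.

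A couple of points I would be careful about. First, the redistribution in the slack/tight argument must keep each $\mathbf P_k$ diagonal with nonnegative entries, which is automatic here since we only scale and add nonnegative amounts. Second, one should check that the optimum is attained and that $C^{\rm DL}>0$ (so ``ratios above the minimum'' is meaningful); attainment follows from continuity of $\overline{\rm SINR}_k^{\rm DL}$ in $\mathbf P$ on the compact feasible simplex $\{\mathbf P\ge_e 0:\sum p_{kj}\le P_{\max}\}$, and positivity from the fact that any strictly positive $\mathbf P$ yields strictly positive SINRs. The main obstacle I anticipate is purely bookkeeping: writing $\overline{\rm SINR}_k^{\rm DL}$ in the promised simplified closed form and verifying the monotonicity claims (i)--(ii) rigorously when the GSINR-FB $\mathbf V_k$ is itself the eigenbasis of \eqref{equ:DLeigenproblem} — one must make sure the statement is about the \emph{achieved} total SINR $\sum_j\lambda_{kj}^{\rm DL}$ (optimized over $\mathbf V_k$), and use the fact that this optimized value is a sum of dominant generalized eigenvalues of $(\mathbf R_{s,k}^{\rm DL},\mathbf R_{n,k}^{\rm DL})$, which inherits the desired monotonicity in $\mathbf P$ from Weyl/monotonicity properties of generalized eigenvalues under the Loewner order. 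Once that is pinned down, the contradiction argument itself is short.
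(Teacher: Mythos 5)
Your proposal is correct and follows essentially the same route as the paper: rewrite $\overline{\rm SINR}_k^{\rm DL}$ as a linear-fractional function of the stream powers (the paper does this via the noise-whitening and trace normalization of $\mathbf{V}_k$, yielding \eqref{equ:SINRAntennaPowera}), then use strict monotonicity in a user's own powers and monotone decrease in the other users' powers to argue that unequal SINR-to-target ratios permit a power redistribution from the above-minimum users to the worst user that strictly raises the minimum ratio, contradicting optimality. The only caveat is that in \eqref{equ:sumpowerconstraint} the beamformers $\mathbf{U},\mathbf{V}$ are held fixed (this is the power-allocation step), so your concern about monotonicity of the re-optimized generalized eigenvalues is unnecessary—the fixed-filter SINR expression already has the required monotonicity.
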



\begin{proof}
The vector norms of the beamforming filters $\mathbf{v}_{kj}$,
$j=1...L_k$, can be adjusted such that

\begin{itemize}
    \item[1)] ${{\bf{V}}_{k}^H {\bf{R}}_{n,k}^{{\rm{DL}}} {\bf{V}}_{k} }$ is a
scaled identity matrix \cite{HJSu02},
    \item[2)] ${\rm trace}\left( {{\bf{V}}_k^H {\bf{V}}_k } \right) =
L_k$.
\end{itemize}
When the above two conditions are satisfied, the average SINR
of user $k$ in the downlink scenario can be expressed as
\begin{equation}
\label{equ:avgSINR_DL} \overline {{\rm{SINR}}} _k^{{\rm{DL}}} =
\frac{1}{{L_k }}\sum\limits_{j = 1}^{L_k }
{{\rm{SINR}}_{kj}^{{\rm{DL}}} } = \frac{{{\rm{trace}}\left(
{{\bf{V}}_k^H {\bf{R}}_{s,k}^{{\rm{DL}}} {\bf{V}}_k }
\right)}}{{{\rm{trace}}\left( {{\bf{V}}_k^H
{\bf{R}}_{n,k}^{{\rm{DL}}} {\bf{V}}_k } \right)}}.
 \end{equation}
Expanding ${\bf R}_{s,k}$ and ${\bf R}_{n,k}$,
\begin{equation}
\overline {{\rm{SINR}}} _k^{{\rm{DL}}}  = \frac{{{\rm{trace}}\left(
{{\bf{V}}_k^H {\bf{H}}_k^H {\bf{U}}_k {\bf{P}}_k {\bf{U}}_k ^H
{\bf{H}}_k {\bf{V}}_k } \right)}}{{\sum\limits_{j \ne k}
{{\rm{trace}}\left( {{\bf{V}}_k^H {\bf{H}}_k^H {\bf{U}}_j {\bf{P}}_j
{\bf{U}}_j^H {\bf{H}}_k {\bf{V}}_k } \right)}  + L_k \sigma ^2 }}.
\label{equ:SINRtrace}
\end{equation}
Since ${\rm{trace}}\left( {{\bf{XY}}} \right) = {\rm{trace}}\left(
{{\bf{YX}}} \right)$ \cite{HornMatrix}, the $\rm trace(\cdot)$
terms can be written as
\begin{equation}
\begin{array}{l}
\hspace{4mm} {\rm trace}({\bf{V}}_k^H {\bf{H}}_k ^H {\bf{U}}_j {\bf{P}}_j {\bf{U}}_j ^H {\bf{H}}_k {\bf{V}}_k ) \\
 = {\rm trace}({\bf{P}}_j {\bf{U}}_j ^H {\bf{H}}_k {\bf{V}}_k {\bf{V}}_k^H {\bf{H}}_k^H {\bf{U}}_j ) \\
 = \sum\limits_{l = 1}^{L_j } {p_{jl} [{\bf{A}}_{jk} ]_{ll} },\\
\end{array}
\end{equation}
where ${\bf{A}}_{jk}  \buildrel \Delta \over = {\bf{U}}_j ^H
{\bf{H}}_k {\bf{V}}_k {\bf{V}}_k^H {\bf{H}}_k ^H {\bf{U}}_j$ and
$[\mathbf{A}_{jk}]_{ll}$ denotes the $l$th diagonal element of
$\mathbf{A}_{jk}$. Therefore, the average SINR of user $k$ is
\begin{equation}
{\rm{\overline {SINR}}}_k^{{\rm{DL}}} = \frac{{\sum\limits_{l =
1}^{L_k } {p_{kl} [{\bf{A}}_{kk} ]_{ll} } }}{{\sum\limits_{j=1,j\ne
k}^K {\sum\limits_{l = 1}^{L_j} {p_{jl} [{\bf{A}}_{jk} ]_{ll} } } +
L_k \sigma ^2 }}.
    \label{equ:SINRAntennaPowera}
\end{equation}

Observing \eqref{equ:SINRAntennaPowera}, we know that the
maximizer of the optimization problem
(\ref{equ:sumpowerconstraint}) satisfies
\begin{equation}
\label{equ:equalC} \frac{{\overline {{\rm{SINR}}} _k^{{\rm{DL}}}
}}{{\gamma _k }} = {C}^{{\rm{DL}}},\ 1 \le k \le K.
\end{equation}
The reason is as the following. Since
$[\mathbf{A}_{jk}]_{ll}>0, \; \forall j,k,l $, each $\overline
{{\rm{SINR}}} _k^{{\rm{DL}}}$ is strictly monotonically increasing
in $p_{kl}$ and monotonically decreasing in $p_{jl}$ for $j \ne
k$. Thus all users must have the same SINR to target ratio
${C}^{{\rm{DL}}}$. Otherwise, the users with higher SINR to target
ratios can give some of their power to the user with the lowest
ratio to increase it, which contradicts the optimality. \\
\end{proof}
Following the same steps of the above proof, the SINR balancing
structure also exists for Problem Pp in
(\ref{equ:powerMinimizationProblem}). Now we can solve power
allocation problems (\ref{equ:sumpowerconstraint}) and
(\ref{equ:powerMinimizationProblem}) with the aid of Proposition
\ref{Prop_SINR_b} which makes these problem trackable as shown in
the following.

\subsection{Simplified Solution: Group Power Allocation}\label{sec:GroupPowerAlloc}
For clarity, we present the simple group power allocation first
then the general per-stream power allocation in the next
subsection. The group power allocation intentionally restricts the
power allocation strategy to make the complicated power allocation
problem with multiple receiver antennas similar to the
simple one in \cite{Yang98}\cite{Sch04} where $N_k=1$. Thus the
group power allocation takes the advantage of the spatial
diversity provided by the GSINR-FB based beamforming to improve
the performance, while keeping the complexity moderate.

To be more specific, the allocated power for a user using the group
power allocation is evenly distributed over all streams of that
user as
\begin{equation}
p_{k1}=p_{k2}=...=p_{kL_k},\ 1\le k \le K.
\end{equation}
Let the power allocated to user $k$ be $p_k$. Consequently, the
diagonal power allocation matrix $\mathbf{P}_k$ for user $k$ can
be written as a scaled identity matrix, that is,
\begin{equation} \label{eq_group_cons}
\mathbf{P}_k=\frac{p_k}{L_k} \mathbf{I}_{L_k}.
\end{equation}
We also define a vector $\mathbf{p}=[p_1,\ldots,p_K]^{\mathrm{T}}$
to replace matrix $\bf P$ in the optimization problems.
Substituting $\mathbf{P}_k=\frac{p_k}{L_k} \mathbf{I}_{L_k}$ into
Equation (\ref{equ:avgSINR_DL}), the average SINR in problems
(\ref{equ:sumpowerconstraint}) and
(\ref{equ:powerMinimizationProblem}) is
\begin{equation}
\overline {{\rm{SINR}}} _k^{{\rm{DL}}} = \frac{{\frac{{p_k
}}{{L_k^2 }}\left\| {{\bf{V}}_k^H {\bf{H}}_k^H {\bf{U}}_k }
\right\|_F^2 }}{{\sum\limits_{j \ne k} {\frac{{p_j }}{{L_j L_k
}}\left\| {{\bf{V}}_k^H {\bf{H}}_k^H {\bf{U}}_j } \right\|_F^2 } +
\sigma ^2 }}. \label{equ:SINRDL}
\end{equation}
With the ``grouped'' constraint on the power allocation strategy
\eqref{eq_group_cons}, the simplified average SINR
\eqref{equ:SINRDL} for $N_k>1$ has the same structure as that in
\cite{Yang98}\cite{Sch04} where $N_k=1$. Thus the solutions of
this simplified group power allocation for Problems Pr and Pp can
be easily obtained. These solutions are briefly presented in the
following subsections. The overall optimization algorithms are
also summarized at the end of each subsection.

\textbf{Group Power Allocation for Problem Pr:} With the SINR
balancing structure from the GSINR-FB beamforming in Proposition
\ref{Prop_SINR_b}, the group power allocation for Problem Pr
(\ref{equ:sumpowerconstraint}) can be solved by a simple
eigensystem as
\begin{equation} \label{eq_GpowerallocationPr}
\mathbf{\Upsilon} {\bf{\tilde{p}}} = \frac{1}{{C^{{\rm{DL}}}
}}{\bf{\tilde{p}}},
\end{equation}
where the extended coupling matrix $\mathbf{\Upsilon}$ and the
extended power vector ${\bf{\tilde p}}$ are defined as
\begin{equation}
\mathbf{\Upsilon}  = \left[ {\begin{array}{*{20}c}
   {{\bf{D\Psi }}} & {{\bf{D\sigma }}}  \\
   {\frac{1}{{P_{\max } }}{\bf{1}}^{T} {\bf{D\Psi }}} & {\frac{1}{P_{\rm max}}{\mathbf{1}^{T}\mathbf{D\bm{\sigma} }}}  \\
\end{array}} \right] \;\; \mbox{and} \;\; \left[ {\begin{array}{*{20}c}
   {\bf{p}}  \\
   1  \\
\end{array}} \right],
\end{equation}
respectively, where
\begin{equation}
{\bf{D}} = diag\left\{ {\frac{{L_1^2 \gamma _1 }}{{\left\|
{{\bf{V}}_1^H {\bf{H}}_1^H {\bf{U}}_1 } \right\|_F^2
}},...,\frac{{L_K^2 \gamma _K }}{{\left\| {{\bf{V}}_K^H
{\bf{H}}_K^H {\bf{U}}_K } \right\|_F^2 }}} \right\} \;\;
\end{equation}
and the $ij$th element of the $K \times K$ matrix $\mathbf{\Psi}$
is zero when $j=i$ or ${\frac{{\left\| {{\bf{V}}_i^H {\bf{H}}_i^H
{\bf{U}}_j } \right\|_F^2 }}{{L_i L_j }}}$ when $j \ne i$.

By using Proposition \ref{Prop_SINR_b} and the simplified average
SINR \eqref{equ:SINRDL} in (\ref{equ:sumpowerconstraint}), the
rest of the proof of the previous results is similar to those in
\cite{Yang98}\cite{Sch04} and omitted. With
(\ref{equ:DLeigenproblem}) and (\ref{eq_GpowerallocationPr}), we
summarize the final optimization algorithm for Problem Pr in Table
\ref{tab:AlgGroupSumPower}, which iteratively calculates the
optimal beamforming filter and power allocation vector between the
downlink and the uplink, where \textit{eig} means the generalized
eigenvalue solver. Due to the uplink-downlink duality described in
Section \ref{sec:duality}, it is guaranteed that the uplink
balanced level $C^{{\rm{UL}}}$ equals to the downlink balanced
level $C^{{\rm{DL}}}$.

\textbf{Group Power Allocation for Problem Pp:} Again, with
Proposition 1, the minimizer of
(\ref{equ:powerMinimizationProblem}) satisfies
\begin{equation}
\label{equ:SINRequalGammak} \overline {{\rm{SINR}}} _k^{{\rm{DL}}}
= \gamma _k ,\ 1 \le k \le K.
\end{equation}
Substituting (\ref{equ:SINRequalGammak}) into (\ref{equ:SINRDL}),
the resulting power allocation vector is
\begin{equation}
\label{equ:GroupMinPowerEq}
{\bf{p}} = ({\bf{I}} - {\bf{D\Psi }})^{ - 1} {\bf{D\bm \sigma }}.
\end{equation}
The optimal $\mathbf{q}$ for the virtual uplink can be obtained
similarly. The overall algorithm for Problem Pp is summarized in
Table \ref{tab:AlgGroupMinPower} which iteratively finds the
optimal solution minimizing the required power.

Note that (\ref{equ:GroupMinPowerEq}) does not necessarily have a
solution with nonnegative elements. When there exists at least one
nonnegative power allocation satisfying the target SINR
constraints and total power constraint $P_{\max}$ in
(\ref{equ:powerMinimizationProblem}), we call the system feasible.
Depending on the channel conditions, the total power required to
achieve the target SINRs could be quite large and exceed
$P_{\max}$. For the purpose of studying the effects of the
algorithms on the system feasibility, we use the sum power
allocation algorithm in Table \ref{tab:AlgGroupSumPower} with a
large $P_{\rm max}$ (43 dBm) to check the feasibility as in
\cite{Sch04,kha06}. In checking the feasibility, as soon as the
balanced level becomes larger than 1 (which means that a feasible
solution can be obtained), the algorithm switches to the power
minimization steps. On the other hand, if the balanced level
remains below 1 when the feasibility testing stage ends, the
feasibility test fails and the power minimization algorithm stops.
In practical applications, when the system is infeasible, one must
relax the constraints by reducing the number of users $K$ or
decreasing the target SINR.


\subsection{ General Solution - Per Stream Power
Allocation} \label{sec:PerStreamPowerAlloc} Now we remove the
restriction of evenly distributing power in a group in Section
\ref{sec:GroupPowerAlloc}. The performance is expected to be
further improved since the group power allocation is a subset of
the per stream power allocation. The general power allocation
solutions presented in this subsection are much more complicated
than the results in \cite{Sch04}\cite{kha06}. The overall
optimization algorithms for Problems Pp and Pr are also summarized
at the end of each subsection.

\textbf{Per Stream Power Allocation for Problem Pp:} The power
minimization problem using the result of Proposition
\ref{Prop_SINR_b} becomes
\begin{equation}
\label{equ:antennaMinPower} \mathop {\min }\limits_{\bf{p}}
\sum\limits_{k = 1}^K {\sum\limits_{j = 1}^{L_k } {p_{kj} } }
{\rm{\ \ s.t.\ }} {\sum\limits_{k = 1}^K {\sum\limits_{j = 1}^{L_k
} {p_{kj} } } \leq P_{\max}} \;\; \mbox{and} \;\; \overline {\rm
SINR} _k^{{\rm{DL}}}  = \gamma _k ,1 \le k \le K.
\end{equation}
With the equivalent SINR expression in
(\ref{equ:SINRAntennaPowera}), we will show that
(\ref{equ:antennaMinPower}) can be elegantly recast as a
well-known linear-programming problem.
We first recall that the
average SINR of user $k$ (\ref{equ:SINRAntennaPowera}) is
\begin{equation}
{\rm{\overline {SINR}}}_k^{{\rm{DL}}} = \frac{{\sum\limits_{l =
1}^{L_k } {p_{kl} [{\bf{A}}_{kk} ]_{ll} } }}{{\sum\limits_{j=1,j\ne
k}^K {\sum\limits_{l = 1}^{L_j} {p_{jl} [{\bf{A}}_{jk} ]_{ll} } } +
L_k \sigma ^2 }}.
    \label{equ:SINRAntennaPower}
\end{equation}
Substituting (\ref{equ:SINRAntennaPower}) into
(\ref{equ:antennaMinPower}), the original power minimization problem
turns into a linear programming problem, that is,
\begin{equation}
\label{eq_PerstreamPp}
\begin{array}{l}
 \min {\rm{\ }}{\bf{1}}^T {\bf{p}} \\
 {\rm{s.t.\ \ }}\sum\limits_{l = 1}^{L_k } {p_{kl} [{\bf{A}}_{kk} ]_{ll}/\gamma_k }  - \sum\limits_{j=1,j\ne k}^K {\sum\limits_{l = 1}^{L_j } {p_{jl} [{\bf{A}}_{jk} ]_{ll} } }  = L_k \sigma^2\\
 \hspace{7mm}{\rm{for\ }}k = 1,...,K, {\rm{\ and\ }} \mathbf{p} \ge_e 0,\\
 \end{array}
\end{equation}
where $\bf p$ represents the vector comprising the
diagonal elements of $\bf P$ as in Section
\ref{sec:GroupPowerAlloc}. It is known that a linear
programming problem can be solved in polynomial time using, for example,
the ellipsoid method or the interior point method
\cite{Boyd03}.

Table \ref{tab:AlgAntennaMinPower} summarizes the proposed
iterative algorithm with group maximum SINR beamforming and per
stream power allocation. The virtual uplink power allocation
problem can be similarly solved as (\ref{eq_PerstreamPp}) with
$\mathbf{A}_{jk}$ replaced by ${\bf{B}}_{jk} \buildrel \Delta
\over = {\bf{V}}_j ^H {\bf{H}}_j ^H {\bf{U}}_k {\bf{U}}_k^H
{\bf{H}}_j {\bf{V}}_j$. Like the group power minimization algorithm
in Table \ref{tab:AlgGroupMinPower}, the feasibility of this algorithm should also
be checked using the per stream sum power allocation which is
described in the next subsection.

\textbf{Per Stream Power Allocation for Problem Pr :} With a fixed
beamforming matrix $\mathbf{U}$, a fixed receive filter
$\mathbf{V}$, and a total power constraint, the optimization
problem obtained by applying Proposition \ref{Prop_SINR_b} in
\eqref{equ:sumpowerconstraint} is
\begin{equation}
\label{equ:PerStreamSumPower}
\begin{array}{l}
 \mathop {\max\ }\limits_{\bf{p}} C^{\rm{DL}} \\
 {\rm{s.t.\ \ }}C^{\rm{DL}} = \frac{{{\rm{SINR}}_k^{{\rm{DL}}} }}{{\gamma _k }}{\rm{,\ \ }}k = 1,...,K\\
 \hspace{7mm}{\rm{and\ \ }}\sum\limits_{k = 1}^K {\sum\limits_{j = 1}^{L_k } {p_{kj} } }  = P_{\max }, \\
\end{array}
\end{equation}
where $\rm{\overline {SINR}}_k^{\rm{DL}}$ is rearranged in form
(\ref{equ:SINRAntennaPower}).

The optimal power allocation vector for this complicated problem
is difficult to obtain, thus we consider a suboptimal solution which can be
found by simple iterative algorithms. First, using the concept of
waterfilling, we fix the proportion of the power of data streams
in each group according to the equivalent channel gains. That is,
let
\begin{equation}
\begin{array}{r}
p_{k1} :p_{k2} : \ldots :p_{kL_k } = [{\bf{A}}_{kk} ]_{11}:[{\bf{A}}_{kk} ]_{22} : \ldots :[{\bf{A}}_{kk} ]_{L_k L_k },\\
\mbox{for } k=1, \ldots, K.
\end{array}
\end{equation}
Therefore, the $L_k$ variables $p_{k1}, \ldots, p_{kL_k}$ can be reduced
to one variable $t_k$ such that $p_{kl}  = t_k \left[
{{\bf{A}}_{kk} } \right]_{ll} /\sum\limits_{i = 1}^{L_k } {\left[
{{\bf{A}}_{kk} } \right]_{ii} }$ for each $l$,
and $\sum\limits_{l = 1}^{L_k } {p_{kl} }
= t_k$. The SINR for user $k$ in Equation
(\ref{equ:SINRAntennaPower}) can be rewritten as
\begin{align}
\label{equ:PerStreamSINRProportion1}
{\overline{\rm{SINR}}}_k^{{\rm{DL}}} & = \frac{{t_k \left(
{\sum\limits_{l = 1}^{L_k } {\left[ {{\bf{A}}_{kk} } \right]_{ll}^2 }
/\sum\limits_{i = 1}^{L_k } {\left[ {{\bf{A}}_{kk} } \right]_{ii} } }
\right)}}{{\sum\limits_{j \ne k} {t_j \left( {\sum\limits_{l =
1}^{L_j } {\left[ {{\bf{A}}_{jj} } \right]_{ll}} \left[
{{\bf{A}}_{jk} } \right]_{ll} /\sum\limits_{i = 1}^{L_j } {\left[
{{\bf{A}}_{jj} } \right]_{ii} } } \right) +
L_k \sigma ^2 } }} \\
\label{equ:PerStreamSINRProportion2}
& = \frac{{t_k g_{kk} }}{{\sum\limits_{j \ne k} {t_j g_{jk} }  + L_k
\sigma ^2 }},
\end{align}
with $\sum\limits_{k = 1}^K {t_k }  = \sum\limits_{k = 1}^K
{\sum\limits_{l = 1}^{L_k } {p_{kl} } }  = P_{\max }$.

We then solve problem (\ref{equ:PerStreamSumPower}) with concepts
similar to the sum power iterative water-filling algorithm
proposed in \cite{Jindal05}. The $n$th iteration of the algorithm
is described in the following. Note that this problem has a
similar form as (\ref{equ:sumpowerconstraint}), thus the balanced
levels, defined as ${\rm{SINR}}_k / \gamma_k$, of all users must
be equal according to Proposition \ref{Prop_SINR_b}. At each
iteration step, we generate a new effective level gain for each
user based on the power of other users from the previous step
$t^o_j,\; j \neq k$ as
\begin{equation}
G_k  = \frac{{g_{kk}/\gamma_k }}{{\sum\limits_{j \ne k} {t_j^{o}
g_{jk} }  + L_k \sigma ^2 }},
\end{equation}
for $k=1, \ldots, K$. The $K$ power variables $t_k$s are
simultaneously updated subject to a sum power constraint. In order
to maintain an equal level, we allocate the new power
proportionally to the inverse of the level gain of each user as
\begin{equation}
t_k  = \frac{{P_{\max } }}{{G_k} \sum\limits_{j = 1}^K
{\frac{1}{{G_j}}} }.
\end{equation}
Note that when updating $t_k$, the power variables of other users are
treated as constants and $t_k>0$.

Similarly, for the virtual uplink, we denote the power variable
for user $k$ as $s_k$ and the effective level gain for user $k$ as $H_k$.
The proposed algorithm for the overall problem Pr is summarized in
Table \ref{tab:AlgAntennaSumPower}.

\subsection{Insights to the performance advantage of the proposed
approaches} \label{sec:Insight} The insights to why the proposed
approaches outperform those in
\cite{kha06}\cite{Sch04} are discussed as follows. First, under the
same power allocation matrix $\mathbf{P}$ in
\eqref{equ:P1_sumPowerConstraint} and \eqref{equ:P2_minPower}, the
GSINR-FB will perform better than the beamformers in \cite{kha06}.
This is because the streams of each user cooperate
with one another in our scheme rather than interfere with one
another as in \cite{kha06}. The mathematical validation was given in
Section \ref{chap:MSINR}. Indeed, as shown in \cite[Section
III]{HJSu02}, the GSINR-FB includes the minimum mean-squared error
(MMSE) filter used in \cite{Sch04}\cite{kha06} as a special case (without cooperation).
Thus the GSINR-FB should have a better performance. As for the power allocation
part, note that our sub-optimal group power allocation has a formulation similar
to that of the power allocation methods in \cite{Sch04}\cite{kha06}. Thus they should be similar in terms of optimality. Our
more complicated per-stream power allocation includes the
group power allocation as special case. Therefore it should perform
better than the group power allocation and the power allocation methods in \cite{Sch04}\cite{kha06}.

Finally, we note that we have no proof whether our iterative algorithms converge to the global optimum or merely local optima. However, as shown by the simulation in the next section, the local optima still result in
much better performance than \cite{Sch04}\cite{kha06}.

\section{ Simulation Results}\label{sec:SimulationResult}
In this section we provide some numerical results to illustrate
the advantages of the proposed algorithms over \cite{kha06} and
the simple BD methods \cite{Choi04}. The design concept of the BD
transmit beamformer is to remove the inter-user interference
in \eqref{equ:DL_ReceiveVector} completely. A BD beamformer can
be found when $M>\sum^K_{i=1,i \ne k} N_i, \forall k$. To
solve Problems Pr and Pp in \eqref{equ:P1_sumPowerConstraint} and
\eqref{equ:P2_minPower}, respectively, and to maximize the average SINR of
the worst user, we also apply the GSINR-FB
as the receive beamformers for the BD cases. Note that this paper focuses on the QoS of
individual users, where the average SINR serves as a metric of QoS. For the BD cases, the
conventional BD receive beamformer design is more for the purpose of sum rate maximization (with waterfilling power allocation), which usually does not maximize the SINR of the worst user.
Thanks to Proposition \ref{Prop_SINR_b}, the
corresponding power allocations can be derived similarly to those
in Section \ref{sec:GroupPowerAlloc} and the details are omitted
here. We also consider both the group and per stream power
allocation strategies for BD, named ``group BD" and ``per stream BD",
respectively.

For the system simulation parameters, the channel matrix
${\mathbf{H}}^H$ is assumed flat Rayleigh faded with independent
and identically distributed (\emph{i.i.d.}) complex Gaussian
elements with zero mean and unit variance. The noise is white
Gaussian with variance 1 W. The transmitter is assumed to
have perfect knowledge of the channel matrix ${\mathbf{H}}^H$, and
each user knows its own equivalent channels as discussed in Section
\ref{sec:duality}. Since typically the transmitter has more
antennas than the receivers, we set the number of streams $L_k$
equal to the number of receive antennas $N_k$ for user $k$. Without
loss of generality, we assume a common SINR constraint
$\gamma$ for all users, i.e., $\gamma
_k = \gamma$ for all $k$. In the following simulation, we generate
1000 channel realizations and average the performance. The
convergence criterion $\epsilon$ of the iterative algorithms is
set to $10^{-3}$.

Fig. \ref{fig:SumPowerNoCombine} shows the simulation results of
the balanced level ${C}^{\rm DL}$ versus total power $P_{\rm max}$
for Problem Pr, where ${C}^{\rm DL}$ is defined as in Proposition
\ref{Prop_SINR_b}. The two proposed algorithms in Table
\ref{tab:AlgGroupSumPower} and Table \ref{tab:AlgAntennaSumPower}
are compared with the method proposed in \cite{kha06} and BD. Note
that in \cite{kha06}, the data streams are processed
separately and the balanced levels are the same for all streams.
With a common SINR constraint $\gamma$ to be satisfied by all
streams, the per-stream balanced level defined in \cite{kha06}
gives the same value as the balanced level ${C}^{\rm DL}$ defined
in Proposition \ref{Prop_SINR_b}. So the comparison of ${C}^{\rm
DL}$ is fair in Fig. \ref{fig:SumPowerNoCombine}. The simulation
parameters are: $K=4$ users, $M=8$ transmit antennas, each user
has 2 receive antennas and 2 streams ($N_k=L_k=2, \;\forall k$),
and the SINR constraint $\gamma=1$. For each channel realization,
all the three algorithms run until convergence but for at most 50
iterations. For fair comparison, only the cases where all the
three algorithms have converged within 50 iterations are
considered in averaging the performance. We will discuss the
convergence probabilities later. It can be seen that the proposed
group power allocation achieves higher balanced levels than the
method in \cite{kha06} at the positive SINR region. The proposed
per stream power allocation further outperforms group power
allocation. Similarly, the per stream BD achieves higher balanced
levels than the group BD since the group BD is a special case of
the per stream BD. Note that the BD schemes perform better when the
total available power $P_{\max}$ is high and perform worse when
the available power is low, since BD is a zero-forcing method
which suffers from the noise enhancement problem at low
$P_{\max}$. When extremely large power is available, BD will
perform close to the proposed methods. However, the operating
region where this phenomenon is obvious needs a much higher power than
our setting in Fig. \ref{fig:SumPowerNoCombine}. We do not
show the simulation results in this region since it is less
practical.

In Fig. \ref{fig:MinPowerNoCombine}, we plot the minimum total
required power $P_{\rm min}$ versus SINR constraint $\gamma$ for
Problem Pp. Simulation parameters are $K=2$ users, $M=8$ transmit
antennas, and each user has $N_k=4$ receive antennas and
$L_k=4$ streams. Again, for the method in \cite{kha06}, a common SINR
target $\gamma$ has to be achieved by all streams. Thus it has the
same average SINR target $\gamma$ for each user as the other
algorithms. For each channel realization, all algorithms first
perform feasibility test using a large $P_{\rm max} = 43$ dBm.
Feasibility test for the method in \cite{kha06} can be done
similarly as the proposed algorithms. As soon as the feasibility
test passes, the corresponding algorithm switches to the power
minimization steps and runs until convergence but for at most 50
iterations. Feasibility test for BD can be done trivially. Only
the cases where all the algorithms have passed the
feasibility test, and converged within 50 iterations, are
considered in averaging the performance. Again, we will defer the
discussions for the infeasible cases and the convergence issues
later. As shown in the figure, the proposed group power allocation
performs better than the method in \cite{kha06} at high SINR.
However, at low SINR it requires more power. This is because group
power allocation suffers for the fact that it cannot adjust the
power within a group as the method in \cite{kha06}. At low SINR,
the interference is larger and the method in \cite{kha06} can
adjust the power within a group to better deal with the
interference. On the other hand, the proposed per stream power
allocation performs better than the other algorithms at both high and low
SINR. Similar to Fig. \ref{fig:SumPowerNoCombine}, the BD methods
perform better than the method in \cite{kha06} at high SINR, but are worse
than the proposed methods in all cases presented. The results at
extremely high power, where the performances of BD and the
proposed methods are close, are not shown due to the same reason
discussed before.

We also present the sum rate comparison in Fig.
\ref{fig:SumRate} where the balanced levels of all users are the
same (as in Fig. \ref{fig:SumPowerNoCombine}) as an indication of the QoS guaranteed and the fairness achieved. The simulation parameters are the same as in Fig. \ref{fig:SumPowerNoCombine}. Note
that the sum rates of both BD methods are worse than the method in \cite{kha06},
while their balanced levels cross over that of \cite{kha06}
in Fig. \ref{fig:SumPowerNoCombine}. This is because under the same average SINR, the method in \cite{kha06} will
make all streams of a user have equal SINR and achieve the
highest sum rate due to the concavity of the $\log$ function. Thus
when a scheme's balanced level advantage over the method in \cite{kha06} is
not significant enough (e.g., the BD schemes), its sum rate may be
lower than that of \cite{kha06}. We emphasis
again that our algorithms focus on the QoS (average SINR) of individual users. Our problem formulations
are fundamentally different from those focusing on sum rate
optimization and not guaranteeing the QoS.

Now we show the feasibility and convergence properties of the proposed
algorithms. In the above simulation setting, the number
of transmit antennas $M$ is equal to the total number of data
streams of all users $\sum^K_{k=1}L_k$ (also equal to the total number
of receive antennas $\sum^K_{k=1}N_k$). We further consider the cases
where $M<\sum^K_{k=1}L_k$ by increasing the number of users $K$.
That is, for Problem Pp, $K=3$, $M=8$, and $N_k=L_k=4, \forall
k;$ while for Problem Pr, $K=5$, $M=8$, $N_k=L_k=2, \forall k$.
We name these cases as Case 2 and the settings for
Fig. \ref{fig:SumPowerNoCombine} and \ref{fig:MinPowerNoCombine}
as Case 1. Note that typically the system will perform scheduling
\cite{Book_Tse} when $M<\sum^K_{k=1}L_k$, that is, it uses
time-division multiple access (TDMA) to schedule a number of users
such that $M=\sum^K_{k=1}L_k$ each time. Thus the simulation
results of Case 1 represent the performance of fully loaded
systems and those of Case 2 well represent the performance of
over-loaded systems.

First we discuss the feasibility issues. From the simulations of
Case 1, we observed that the proposed algorithms and the method in
\cite{kha06} passed the feasibility test for almost all channel
realizations. Intuitively, group power allocation is more feasible
than the method in \cite{kha06} because the average, instead of
per stream, SINR constraints are easier to be achieved, and they
make Equation (\ref{equ:GroupMinPowerEq}) better conditioned than
the corresponding equation in \cite{kha06}. Thus nonnegative
solutions of (\ref{equ:GroupMinPowerEq}) are easier to be found.
In addition, since the group power allocation is a special case of
the per stream power allocation with equal power distribution
among the streams of a user, the per stream power allocation
method should be even more feasible. As an example, when a high
target SINR ($\gamma=12$ dB) is desired, simulation shows that the
probabilities of feasibility for the group power allocation, the
per stream power allocation, the method in \cite{kha06}, group BD
and per stream BD are 99\%, 100\%, 67\%, 100\%, and 100\%,
respectively. For Case 2, the system can only support lower target
SINR and the probabilities of feasibility for the above five
algorithms when $\gamma=3$ dB are 100\%, 100\%, 0\%, 0\%, and 0\%,
respectively. Note that for Case 2, the BD based methods can not be
applied since $M$ is not large enough.

As for the insights to the convergence behavior, typically
each optimization step improves its objective function as outlined
in Section \ref{sec:duality}. The beamforming step maximizes each
user's sum SINR of the data streams and the power allocation step
optimizes the balanced level. As an example, in Fig.
\ref{fig:IterationVSlevel}, we plot the balanced levels versus
iteration times of the two proposed algorithms for Problem Pr
under the same channel conditions. The total power constraint is
set to 15 dBm, and the SINR constraint $\gamma=1$. The arrows
point at the numbers of iterations where the algorithms meet the
convergence criteria. From the figure, the two proposed algorithms
typically do not oscillate often and exhibit smooth transient
behaviors. We also observe that the convergence behavior of
the group power allocation is slightly better than that of the per
stream power allocation, i.e., the per stream power allocation is
not as smooth as the group power allocation and needs more
iterations to approach the balanced level. The reason why the per
stream power allocation has worse convergence behavior is that
after a power allocation step, the noise whitening property
obtained by the previous maximum SINR filter bank may no longer be
valid, that is, ${{\bf{V}}_{k}^H {\bf{R}}_{n,k}^{{\rm{DL}}}
{\bf{V}}_{k} }$ may no longer be a scaled identity matrix. This
effect may decrease the balanced level. However, in most cases
this negative effect has a small impact on the eventual
performance. Table \ref{tab:IterationTimes} lists the iteration
times needed to converge for both problems. From these results, we
can see that all three methods need more iterations to converge in
Case 2. Note that for Problem Pp, the target SINRs $\gamma$ for
Case 2 are smaller than those of Case 1 since the method in
\cite{kha06} is not feasible for $\gamma \geq 3$ dB. Also, the
method in \cite{kha06} needs significantly more iterations when
$\gamma=2$ dB.

Fig. \ref{fig:ConvProb} shows the probabilities of the
proposed algorithms and the method in \cite{kha06} converging
within 50 iterations given that they have passed the feasibility
test, for the power minimization problem in Case 1 (settings of
Fig. \ref{fig:MinPowerNoCombine}). We can see that the group power
allocation and the method in \cite{kha06} both exhibit good
convergence probabilities while the per stream power allocation
converges better at low SINR than at high SINR. The reason for the
lower convergence probability of the per stream power allocation
is that the linear programming makes the algorithm prone to
oscillation between feasible solutions from iteration to
iteration.
In practice, as long as the solution is a nonnegative power
vector, the SINR constraints are achieved, no matter the algorithm
oscillates or not. Moreover, even when the per stream power
allocation oscillates at the final iterations, typically the SINRs
are still higher than that of the group power allocation. So one
can simply pick the solution at the final iteration and still
obtain a better performance. The other way is to avoid oscillation
by switching to the group power allocation whenever the per stream
power allocation algorithm oscillates. The performance of this
combined algorithm should be between the performance of the per
stream power allocation and the group power allocation. Fig.
\ref{fig:ConvProb_more_user} shows the convergence probability for
Problem Pp in Case 2. Since the method in \cite{kha06} is not
feasible when SINR constraint for $\gamma \geq 3$ dB, we only plot
for $\gamma < 3$ dB. The group power allocation still converges
almost surely in this overloaded case.

\section{Computational Complexity}\label{sec:Discussions}
In Table \ref{tab:Complexity}, we compare the computational
complexity in one iteration based on the number of complex
multiplications. From this table, one can see that there is
no single step which dominates the complexity for each algorithm,
so we list all of them for comparison. For each optimization step,
the complexity of group power allocation (Table
\ref{tab:AlgGroupSumPower}) or power minimization (Table
\ref{tab:AlgGroupMinPower}) is lower than that of the method in
\cite{kha06}, and we have shown in Section
\ref{sec:SimulationResult} that the performances of the proposed
methods are also superior. The reason for
the complexity saving of the
group power allocation method is due to the fact that the method in
\cite{kha06} processes the streams separately (matrix dimension $L$), while the
group power allocation processes the streams of a user jointly (matrix dimension $K$, $K<L$).
For the per stream algorithms (Table \ref{tab:AlgAntennaSumPower}
and \ref{tab:AlgAntennaMinPower}), the complexity of power
allocation is at most slightly higher than that of the method in
\cite{kha06}, but the performance is much better.

In addition to the computational complexity in one iteration, the
average number of iterations needed for convergence also affects
the system complexity. The average numbers of iterations for the
three algorithms in the simulation settings of Fig.
\ref{fig:MinPowerNoCombine} and Fig. \ref{fig:SumPowerNoCombine}
are shown in Table \ref{tab:IterationTimes} (for the power
minimization Problem Pp, the average number of iterations needed
by the feasibility test is included). This table shows that the
group power allocation method has the fastest convergence among
the three algorithms, while the per stream power allocation has
the slowest convergence. Compared to the method in \cite{kha06},
the group power allocation has a lower computational complexity,
converges faster and performs better. If more complicated
computation is allowed, the per stream power allocation exhibits
even better performance.

As for the BD algorithms used in this paper, the computation of the zero-forcing
transmit beamformers has approximately the same complexity as
that of the ``uplink beamforming'' step of Group (Pr) in Table
\ref{tab:Complexity}; while the complexity for receiver
beamformers is approximately the same as that of ``downlink
beamforming'' step of Group (Pr). The complexity of the power
allocation steps is negligible compared with those of the
beamformers. Since the BD algorithms do not need iterations, they are not
listed in the comparisons in Tables \ref{tab:Complexity} and \ref{tab:IterationTimes} (nor in Fig. \ref{fig:ConvProb}).

\section{Conclusion}\label{sec:Conclusion}
Efficient solutions to the joint transmit-receive beamforming and
power allocation under average SINR constraints in the multi-user
MIMO downlink systems were proposed. The beamforming filter is a
GSINR-FB which exploits the intra-group cooperation of grouped
data streams. Due to this selection, the SINR balancing structure
of optimal power allocation holds and simplifies the computation.
Based on the uplink-downlink duality, we formulated the dual
problem in the virtual uplink, and iteratively solved the optimal
beamforming filters and power allocation matrices. The proposed
algorithms are generalizations of the one in \cite{Sch04} to the
scenario with multiple receive antennas per user, and exploit the
receiver diversity more effectively than \cite{kha06}. Simulation
results demonstrated the superiority of the proposed algorithms
over methods based on independent data stream processing
\cite{kha06} and BD in terms of performance. Moreover, the
computational complexities of the proposed methods are comparable with
that of \cite{kha06}.

%

\bibliographystyle{IEEEtran}
\bibliography{IEEEabrv,CodeSN,scpub,icc09ref,WCOMref}
\newpage
\begin{table}
\renewcommand{\arraystretch}{1.1}
\caption{Basic Steps of the $n$th Iteration} \label{tab:AlgBasic}
\centering
\begin{tabular}{ll}
\hline \hline \\
1: & \textit{First Downlink Power Allocation} \\
 & Fixed ${\bf U}^{(n)}$ and ${\bf V}^{(n)}$, find new ${\bf P}^{(2n+1)}$
 \\
2: & \textit{Downlink Receive Maximum SINR Beamforming} \\
 & Fixed ${\bf P}^{(2n+1)}$ and ${\bf U}^{(n)}$, find new ${\bf V}^{(n+1)}$
 \\
3: & \textit{Second Downlink Power Allocation} \\
 & Fixed ${\bf
U}^{(n)}$ and ${\bf V}^{(n+1)}$, find new ${\bf P}^{(2n+2)}$
 \\
4: & \textit{First Virtual Uplink Power Allocation} \\
 & Fixed ${\bf V}^{(n+1)}$ and ${\bf U}^{(n)}$, find new ${\bf Q}^{(2n+1)}$
 \\
5: & \textit{Virtual Uplink Receive Maximum SINR Beamforming} \\
 & Fixed ${\bf Q}^{(2n+1)}$ and ${\bf
V}^{(n+1)}$, find new ${\bf U}^{(n+1)}$
 \\
6: & \textit{Second Virtual Uplink Power Allocation} \\
 & Fixed ${\bf U}^{(n+1)}$ and ${\bf V}^{(n+1)}$, find new ${\bf
Q}^{(2n+2)}$.
 \\ \\
\hline \hline \\
\end{tabular}
\end{table}

\begin{table}
\renewcommand{\arraystretch}{1.1}
\caption{Iterative algorithm for Problem Pr with group power allocation}
\label{tab:AlgGroupSumPower} \centering
\begin{tabular}{ll}
\hline \hline \\
\multicolumn{2}{l}{ \textit{\textbf{Initialization:}}
 $\mathbf{U}=\mathbf{I},\mathbf{V}=\mathbf{I}$} \\
\multicolumn{2}{l}{\textit{\textbf{Iteration:}}} \\
1: & \textit{First Downlink Power Allocation with Sum Power Constraint} \\
 & Solve $\bf p$ in
$\mathbf{\Upsilon}
    \begin{pmatrix}
        \mathbf{p} \\
        1
    \end{pmatrix}
 = \frac{1}{{C^{{\rm{DL}}} }}\begin{pmatrix}
        \mathbf{p} \\
        1
    \end{pmatrix}$
 \\
2: & \textit{Downlink Receive Maximum SINR Beamforming} \\
 &
 for $k=1:K$ \\
 &
 \hspace{5 mm}$\mathbf{V}_k=eig(\mathbf{R}_{s,k}^\mathrm{DL},\mathbf{R}_{n,k}^\mathrm{DL})$
 \\
3: & \textit{Second Downlink Power Allocation with Sum Power Constraint} \\
 & Solve $\bf p$ in
$\mathbf{\Upsilon} \begin{pmatrix}
        \mathbf{p} \\
        1
    \end{pmatrix} = \frac{1}{{C^{{\rm{DL}}} }}\begin{pmatrix}
        \mathbf{p} \\
        1
    \end{pmatrix}$
 \\
4: & \textit{First Virtual Uplink Power Allocation with Sum Power Constraint} \\
 & Solve $\bf q$ in
$\mathbf{\Lambda } \begin{pmatrix}
        \mathbf{q} \\
        1
    \end{pmatrix} = \frac{1}{{C^{{\rm{UL}}} }}\begin{pmatrix}
        \mathbf{q} \\
        1
    \end{pmatrix}$
 \\
5: & \textit{Virtual Uplink Receive Maximum SINR Beamforming} \\
 &
 for $k=1:K$ \\
 &
 \hspace{5 mm}$\mathbf{U}_k=eig(\mathbf{R}_{s,k}^\mathrm{UL},\mathbf{R}_{n,k}^\mathrm{UL})$
 \\
6: & \textit{Second Virtual Uplink Power Allocation with Sum Power Constraint} \\
 & Solve $\bf q$ in
$\mathbf{\Lambda } \begin{pmatrix}
        \mathbf{q} \\
        1
    \end{pmatrix} = \frac{1}{{C^{{\rm{UL}}} }}\begin{pmatrix}
        \mathbf{q} \\
        1
    \end{pmatrix}$
 \\
7: & \textit{Repeat steps 1-6 until convergence, i.e.,
$|C^{{\rm DL}(n)}-C^{{\rm DL}(n-1)}|< \epsilon $} \\
\hline \hline \\
\end{tabular}
\end{table}

\begin{table}
\renewcommand{\arraystretch}{1.1}
\caption{Iterative algorithm for Problem Pp with group power allocation} \label{tab:AlgGroupMinPower}
\centering
\begin{tabular}{ll}
\hline \hline \\
\multicolumn{2}{l}{ \textit{\textbf{Initialization:}} Feasibility
test using the algorithm in Table
\ref{tab:AlgGroupSumPower}}, if failure then exit. \\

\multicolumn{2}{l}{ \textit{\textbf{Iteration:}}} \\
1: & \textit{First Downlink Power Minimization} \\

 & ${\bf{p}} = ({\bf{I}} - {\bf{D\Psi }})^{-1} {\bf{D\bm \sigma }}$\\

2: & \textit{Downlink Receive Maximum SINR Beamforming} \\
 &
 for $k=1:K$ \\
 &
 \hspace{5 mm}$\mathbf{V}_k=eig(\mathbf{R}_{s,k}^\mathrm{DL},\mathbf{R}_{n,k}^\mathrm{DL})$
 \\
3: & \textit{Second Downlink Power Minimization} \\
 & ${\bf{p}} = ({\bf{I}} - {\bf{D\Psi }})^{-1} {\bf{D\bm \sigma }}$\\

4: & \textit{First Virtual Uplink Power Minimization} \\
 &  ${\bf{q}} = ({\bf{I}} - {\bf{D\Psi }}^T )^{ - 1} {\bf{D\bm \sigma}}$\\

5: & \textit{Virtual Uplink Receive Maximum SINR Beamforming} \\
 &
 for $k=1:K$ \\
 &
 \hspace{5 mm}$\mathbf{U}_k=eig(\mathbf{R}_{s,k}^\mathrm{UL},\mathbf{R}_{n,k}^\mathrm{UL})$
 \\
6: & \textit{Second Virtual Uplink Power Minimization} \\
 &  ${\bf{q}} = ({\bf{I}} - {\bf{D\Psi }}^T )^{ - 1} {\bf{D\bm \sigma}}$\\

7: & \textit{Repeat steps 1-6 until convergence, i.e.,
$|C^{{\rm DL}}-1|< \epsilon $} \\
\hline \hline \\
\end{tabular}
\end{table}

\begin{table}
\renewcommand{\arraystretch}{1}
\caption{Iterative algorithm for Problem Pp with per stream power allocation}
\label{tab:AlgAntennaMinPower} \centering
\begin{tabular}{ll}
\hline \hline \\
\multicolumn{2}{l}{ \textit{\textbf{Initialization:}}
Feasibility test using the algorithm in Table \ref{tab:AlgAntennaSumPower}}, if failure then exit.\\
\multicolumn{2}{l}{\textit{\textbf{Iteration:}}} \\
1: & \textit{First Downlink Power Minimization} \\
 & {Solve $\bf p$ in the linear programming problem:} \\
 &
$\begin{array}{l}
 \min {\rm{\ }}{\bf{1}}^T {\bf{p}} \\
 {\rm{s.t.\ \ }}\sum\limits_{l = 1}^{L_k } {p_{kl} [{\bf{A}}_{kk} ]_{ll}/\gamma_k }  - \sum\limits_{j=1,j\ne k}^K {\sum\limits_{l = 1}^{L_j } {p_{jl} [{\bf{A}}_{jk} ]_{ll} } }  = L_k \sigma^2,\\
 \hspace{7mm}{\rm{for\ }}k = 1,...,K, {\rm{\ and\ }} \mathbf{p} \ge_e 0\\
 \end{array}$
 \\
2: & \textit{Downlink Receive Maximum SINR Beamforming} \\
 &
 for $k=1:K$ \\
 &
 \hspace{5 mm}$\mathbf{V}_k=eig(\mathbf{R}_{s,k}^\mathrm{DL},\mathbf{R}_{n,k}^\mathrm{DL})$
 \\
3: & \textit{Second Downlink Power Minimization} \\
 & {Solve $\bf p$ in the linear programming problem:} \\
 &
$\begin{array}{l}
 \min {\rm{\ }}{\bf{1}}^T {\bf{p}} \\
 {\rm{s.t.\ \ }}\sum\limits_{l = 1}^{L_k } {p_{kl} [{\bf{A}}_{kk} ]_{ll}/\gamma_k }  - \sum\limits_{j=1,j\ne k}^K {\sum\limits_{l = 1}^{L_j } {p_{jl} [{\bf{a}}_{jk} ]_{ll} } }  = L_k \sigma^2,\\
 \hspace{7mm}{\rm{for\ }}k = 1,...,K, {\rm{\ and\ }} \mathbf{p} \ge_e 0\\
 \end{array}$
 \\
4: & \textit{First Virtual Uplink Power Minimization} \\
 & {Solve $\bf q$ in the linear programming problem:} \\
 &
 $\begin{array}{l}
 \min {\rm{\ }}{\bf{1}}^T \mathbf{q} \\
 {\rm{s.t.\ \ }}\sum\limits_{l = 1}^{L_k } {q_{kl} [{\bf{B}}_{kk} ]_{ll}/\gamma_k }  - \sum\limits_{j=1,j\ne k}^K {\sum\limits_{l = 1}^{L_j } {q_{jl} [{\bf{B}}_{kj} ]_{ll} } }  = L_k \sigma^2,\\
 \hspace{7mm}{\rm{for\ }}k = 1,...,K, {\rm{\ and\ }}\mathbf{q} \ge_e 0 \\
 \end{array}$
 \\
5: & \textit{Virtual Uplink Receive Maximum SINR Beamforming} \\
 &
 for $k=1:K$ \\
 &
 \hspace{5 mm}$\mathbf{U}_k=eig(\mathbf{R}_{s,k}^\mathrm{UL},\mathbf{R}_{n,k}^\mathrm{UL})$
 \\
6: & \textit{Second Virtual Uplink Power Minimization} \\
 & {Solve $\bf q$ in the linear programming problem:} \\
 &
 $\begin{array}{l}
 \min {\rm{\ }}{\bf{1}}^T \mathbf{q} \\
 {\rm{s.t.\ \ }}\sum\limits_{l = 1}^{L_k } {q_{kl} [{\bf{B}}_{kk} ]_{ll}/\gamma_k }  - \sum\limits_{j=1,j\ne k}^K {\sum\limits_{l = 1}^{L_j } {q_{jl} [{\bf{B}}_{kj} ]_{ll} } }  = L_k \sigma^2,\\
 \hspace{7mm}{\rm{for\ }}k = 1,...,K, {\rm{\ and\ }}\mathbf{q} \ge_e 0 \\
 \end{array}$
 \\
7: & \textit{Repeat steps 1-6 until convergence, i.e., $|C^{{\rm DL}}-1|< \epsilon $} \\
\hline \hline \\
\end{tabular}
\end{table}

\begin{table}
\renewcommand{\arraystretch}{1}
\caption{Iterative algorithm for Problem Pr with per stream power allocation}
\label{tab:AlgAntennaSumPower} \centering
\begin{tabular}{ll}
\hline \hline \\
\multicolumn{2}{l}{ \textit{\textbf{Initialization:}}
 $\mathbf{U}=\mathbf{I},\mathbf{V}=\mathbf{I}$} \\
\multicolumn{2}{l}{\textit{\textbf{Iteration:}}} \\
1: & \textit{First Downlink Power Allocation with Sum Power Constraint} \\
 &
 $t_k^{(2n + 1)}  = \frac{{P_{\max } }}{{G_k^{(2n)} \sum\limits_{j =
1}^K {\frac{1}{{G_j^{(2n)} }}} }}$, where $G_k^{(2n)}  =
\frac{{g_{kk}/\gamma_k }}{{\sum\limits_{j \ne k} {t_j^{(2n)} g_{jk} }
+ L_k \sigma ^2 }}$.
 \\
2: & \textit{Downlink Receive Maximum SINR Beamforming} \\
 &
 for $k=1:K$ \\
 &
 \hspace{5 mm}$\mathbf{V}_k=eig(\mathbf{R}_{s,k}^\mathrm{DL},\mathbf{R}_{n,k}^\mathrm{DL})$
 \\
3: & \textit{Second Downlink Power Allocation with Sum Power Constraint} \\
 &
 $t_k^{(2n + 2)}  = \frac{{P_{\max } }}{{G_k^{(2n+1)} \sum\limits_{j =
1}^K {\frac{1}{{G_j^{(2n+1)} }}} }}$, where $G_k^{(2n+1)}  =
\frac{{g_{kk}/\gamma_k }}{{\sum\limits_{j \ne k} {t_j^{(2n+1)} g_{jk} }
+ L_k \sigma ^2 }}$.
 \\
4: & \textit{First Virtual Uplink Power Allocation with Sum Power Constraint} \\
 &
 $s_k^{(2n + 1)}  = \frac{{P_{\max } }}{{H_k^{(2n)} \sum\limits_{j =
1}^K {\frac{1}{{H_j^{(2n)} }}} }}$, where $H_k^{(2n)}  =
\frac{{h_{kk}/\gamma_k }}{{\sum\limits_{j \ne k} {s_j^{(2n)}
h_{jk} } + L_k \sigma ^2 }}$.
 \\
5: & \textit{Virtual Uplink Receive Maximum SINR Beamforming} \\
 &
 for $k=1:K$ \\
 &
 \hspace{5 mm}$\mathbf{U}_k=eig(\mathbf{R}_{s,k}^\mathrm{UL},\mathbf{R}_{n,k}^\mathrm{UL})$
 \\
6: & \textit{Second Virtual Uplink Power Allocation with Sum Power Constraint} \\
 &
 $s_k^{(2n + 2)}  = \frac{{P_{\max } }}{{H_k^{(2n+1)} \sum\limits_{j =
1}^K {\frac{1}{{H_j^{(2n+1)} }}} }}$, where $H_k^{(2n+1)}  =
\frac{{h_{kk}/\gamma_k }}{{\sum\limits_{j \ne k} {s_j^{(2n+1)}
h_{jk} }  + L_k \sigma ^2 }}$.
 \\
7: & \textit{Repeat steps 1-6 until convergence, i.e., $|C^{{\rm DL}(n)}-C^{{\rm DL}(n-1)}|< \epsilon $                 } \\
\hline \hline \\
\end{tabular}
\end{table}


\begin{table}
    \renewcommand{\arraystretch}{1.2}
    \caption{Complexities of the optimization steps in one iteration}
    \label{tab:Complexity} \centering
    \begin{tabular}{@{}lp{2.2cm}p{2.2cm}p{2.2cm}p{2.2cm}@{}}  
    \toprule
     & \sf Uplink Beamforming & \sf Uplink Power Allocation & \sf Downlink Beamforming &  \sf Downlink Power Allocation\\

    \midrule
    \sf Group (Pr) & $O(KM^3)$ & $O((K+1)^3)$ & $O(\sum_{k=1}^{K}L_k^3)$ & $O((K+1)^3)$ \\

    \sf Group (Pp) & $O(KM^3)$ & $O(K^3)$ & $O(\sum_{k=1}^{K}L_k^3)$ & $O(K^3)$ \\

    \sf Per Stream (Pr) & $O(KM^3)$ & $O(\sum_{k=1}^{K}L_k^2 M)$ & $O(\sum_{k=1}^{K}L_k^3)$ & $O(\sum_{k=1}^{K}L_k^2 M)$\\

    \sf Per Stream (Pp) & $O(KM^3)$ & $O(L^{3.5})$ & $O(\sum_{k=1}^{K}L_k^3)$ & $O(L^{3.5})$ \\

    \sf Khachan's  (Pr) & $O(LM^3)$ & $O(L^3)$ & $O(\sum_{k=1}^{K}L_k^4)$ & $O(L^3)$ \\

    \sf Khachan's  (Pp) & $O(LM^3)$ & $O(L^3)$ & $O(\sum_{k=1}^{K}L_k^4)$ & $O(L^3)$ \\
    \bottomrule
    \end{tabular}
\end{table}

\begin{table}
    \renewcommand{\arraystretch}{1.2}
    \caption{Average numbers of iterations needed for convergence}
    \label{tab:IterationTimes} \centering

    \begin{tabular}{@{}lp{1cm}p{1cm}p{1cm}llp{1cm}p{1cm}p{1cm}@{}}  
    \multicolumn{9}{c}{Case 1 (Fully loaded)}\\
    \multicolumn{4}{c}{Problem Pp}& &\multicolumn{4}{c}{Problem Pr}\\
    \cline{1-4} \cline{6-9}
     SINR constraint ($\gamma$) & 2 &  4 & 6 & & Pmax(dB) & 10 & 12 & 14\\
    \cline{1-4} \cline{6-9}
    \sf Group &   9.6500  &   10.7200 &   11.7200& & &12.359 & 12.608 & 12.558\\
    \sf Khachan's &     15.3720  &  17.7600  & 21.7200& & & 13.857 & 13.316 & 13.382\\
    \sf Per Stream &     25.7861   &  26.1097  &  30.2598& & & 15.475 & 14.871 & 13.906\\
    \cline{1-4} \cline{6-9}
    \\
    \multicolumn{9}{c}{Case 2 (Over loaded)}\\
    \multicolumn{4}{c}{Problem Pp}& &\multicolumn{4}{c}{Problem Pr}\\
    \cline{1-4} \cline{6-9}
     SINR constraint ($\gamma$) & -2 &  0 & 2 & & Pmax(dB) & 10 & 12 & 14\\
    \cline{1-4} \cline{6-9}
    \sf Group &   10.41  &   11.88 &   14.96& & &15.43 & 17.37 & 20.31\\
    \sf Khachan's &     16.63  &  22.97  & 40.93& & & 14.11 & 14.58 & 15.72\\
    \sf Per Stream &     25.78   &  26.10  &  30.25& & & 17.53 & 18.3 & 17.04\\
    \cline{1-4} \cline{6-9}
    \\
    \end{tabular}

\end{table}

\begin{figure}
  \centering
  \includegraphics[scale=0.6]{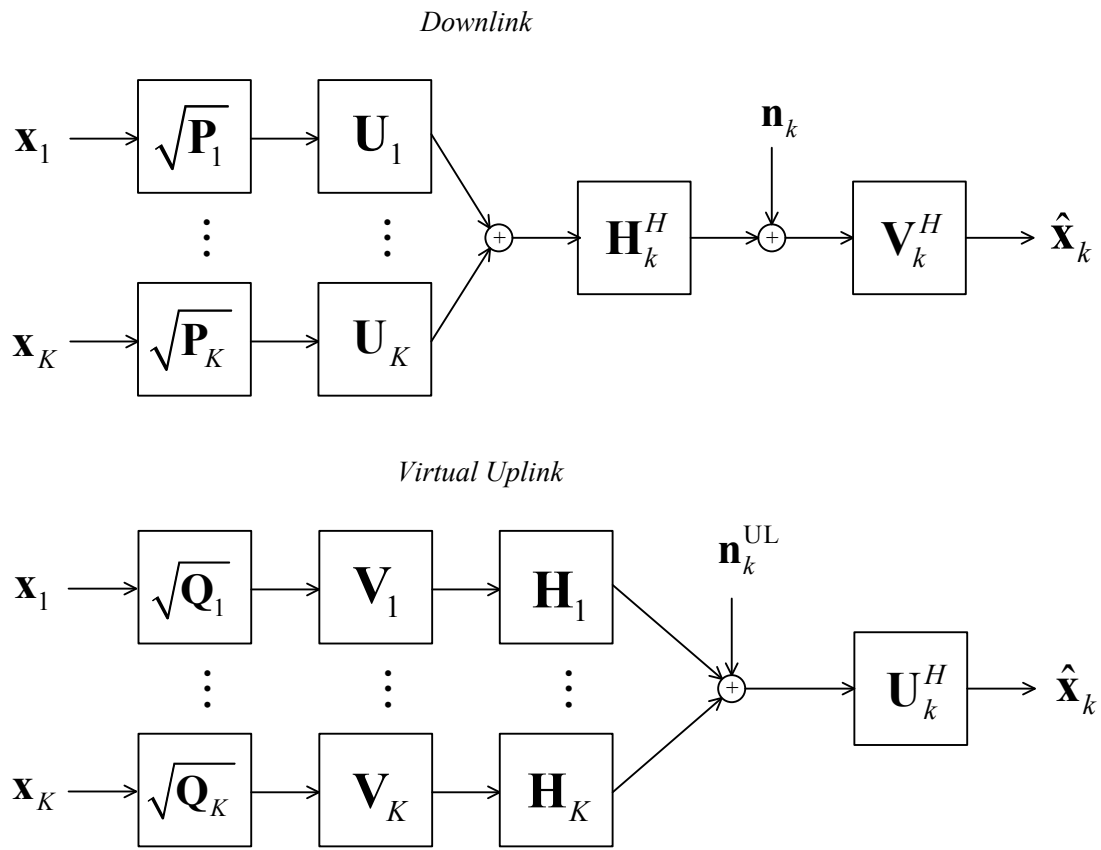}\\
  \caption{ MIMO downlink system model for user $k$ and its virtual uplink.}\label{fig:DL_UL}
\end{figure}

\begin{figure}
  \centering
  \includegraphics[scale=0.9]{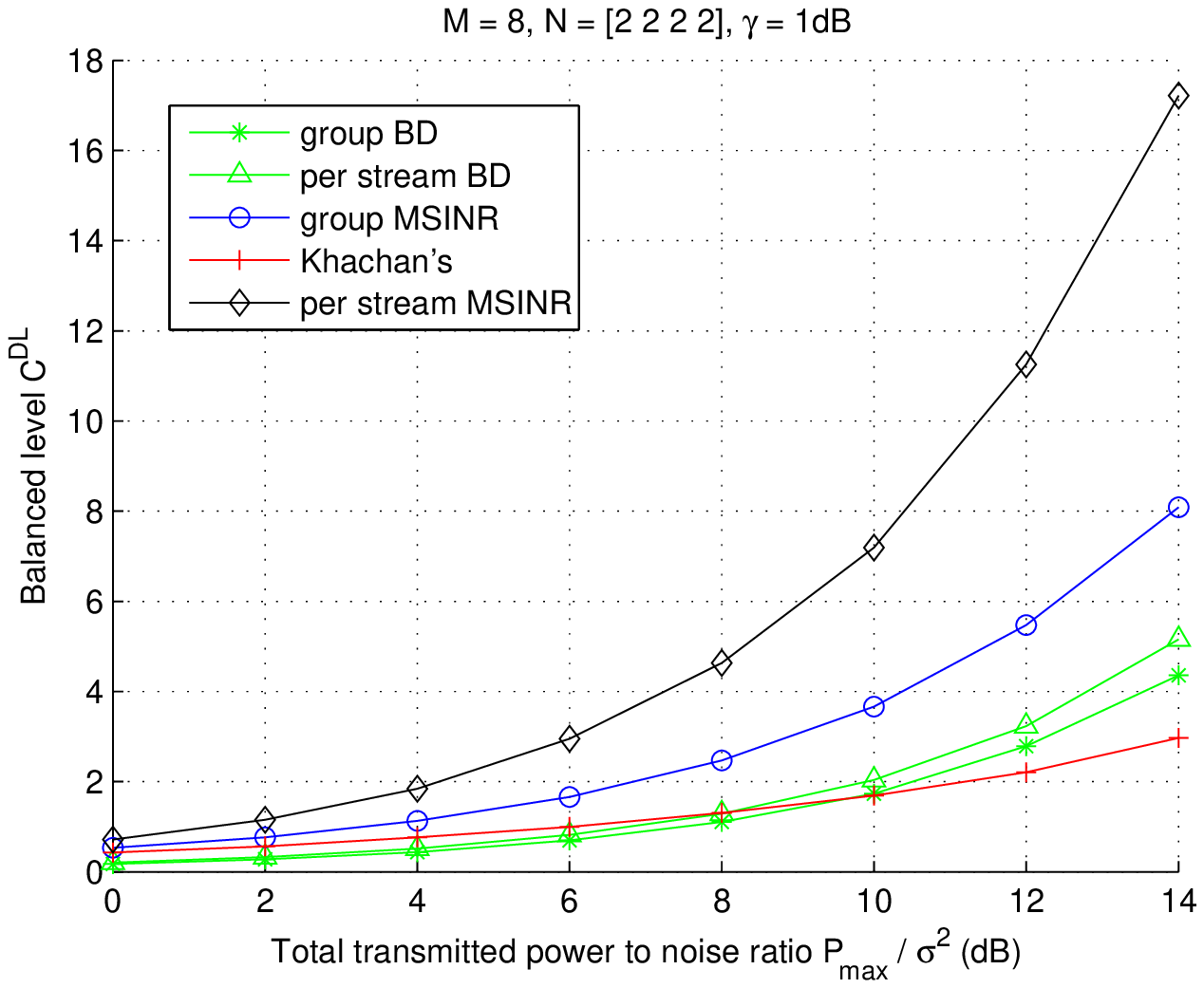}\\
  \caption{Comparison with \cite{kha06} and BD for Problem Pr. $K=4$, $M=8$, $N_k=2, \;\forall k$.}\label{fig:SumPowerNoCombine}
\end{figure}

\begin{figure}
  \centering
  \includegraphics[scale=0.9]{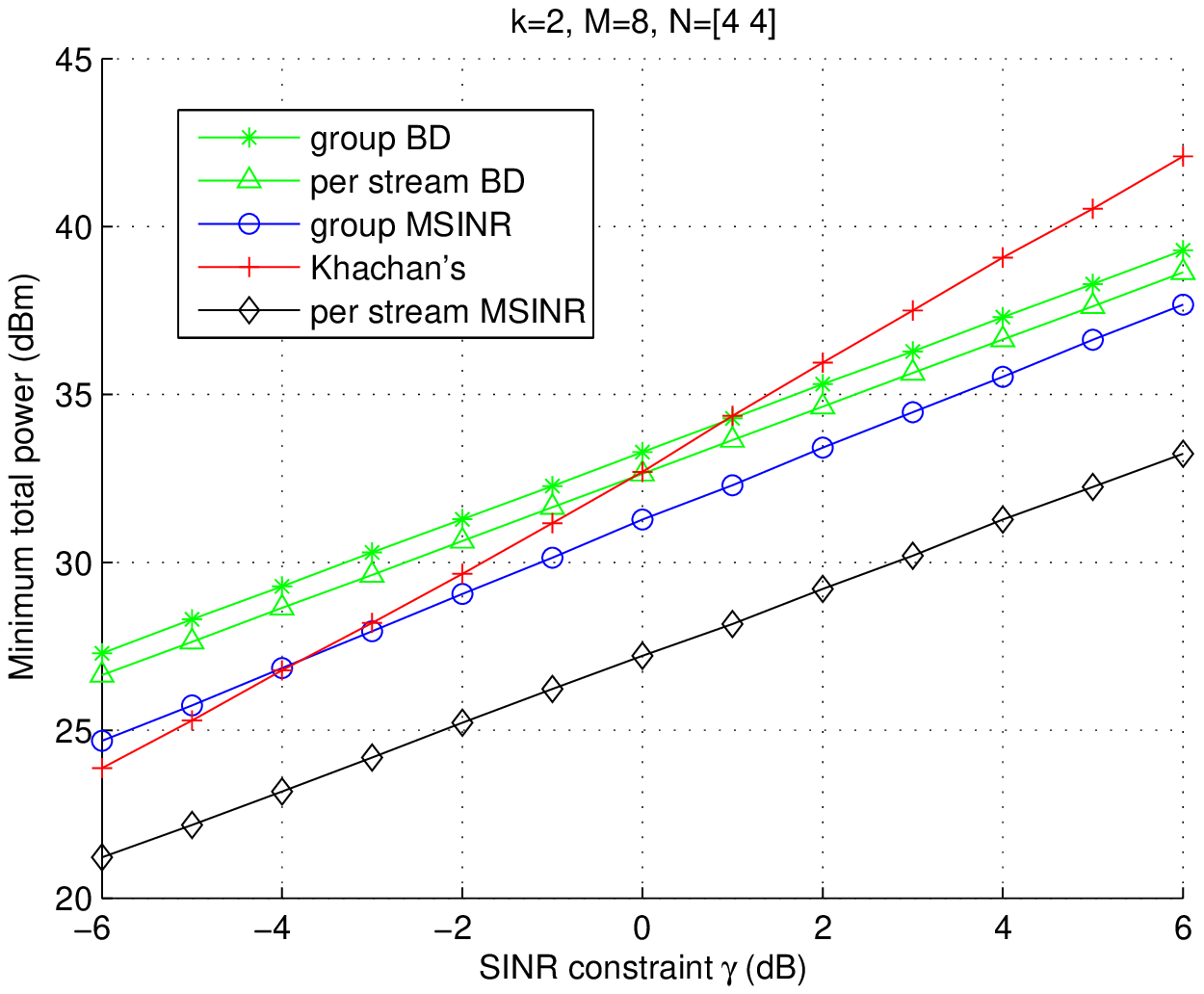}\\
  \caption{Comparison with \cite{kha06} and BD for Problem Pp. $K=2$, $M=8$, $N_k=4, \;\forall k$.}\label{fig:MinPowerNoCombine}
\end{figure}

\begin{figure}
  \centering
  \includegraphics[scale=0.9]{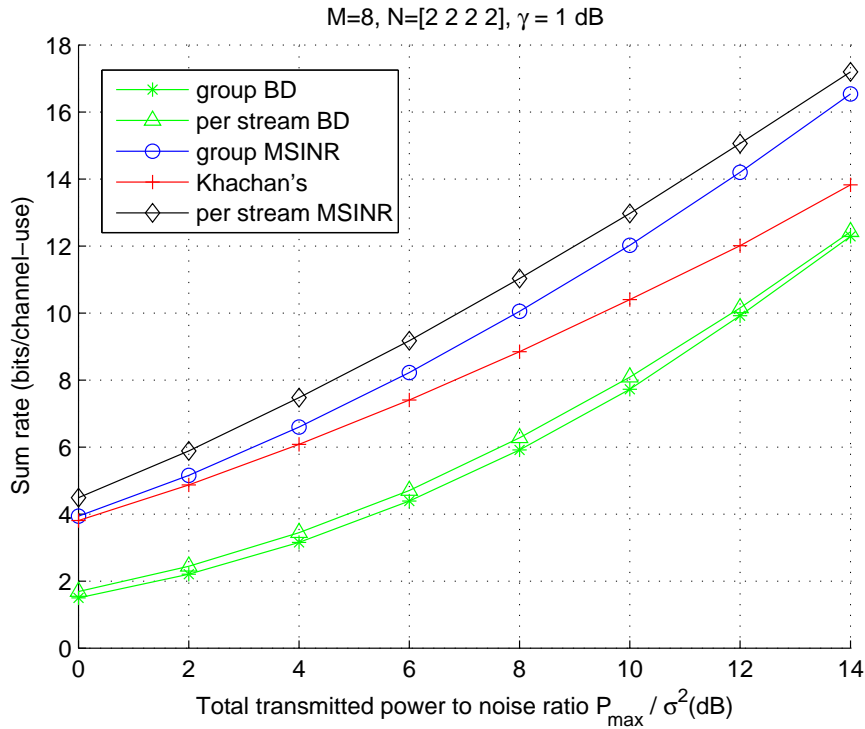}\\
  \caption{Sum rate comparison with \cite{kha06} and BD for Problem Pr. $K=4$, $M=8$, $N_k=2, \;\forall k$.}\label{fig:SumRate}
\end{figure}

\begin{figure}
  \centering
  \includegraphics[scale=0.85]{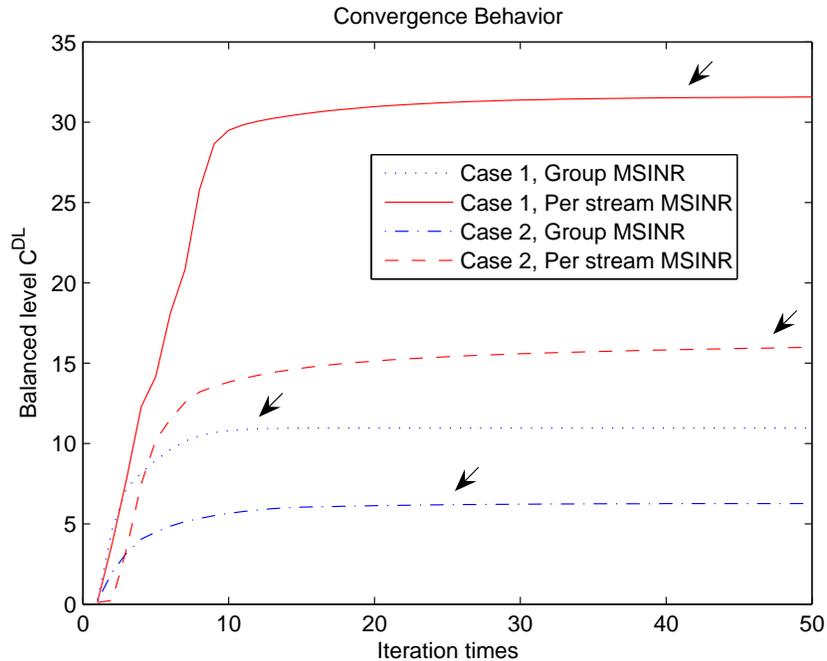}\\
  \caption{Convergence behaviors of the proposed algorithms for Problem Pr in Case 1 (fully loaded) and Case 2 (over loaded). The
arrows point at the numbers of iterations where the algorithms meet the
convergence criteria.}\label{fig:IterationVSlevel}
\end{figure}

\begin{figure}
  \centering
  \includegraphics[scale=0.9]{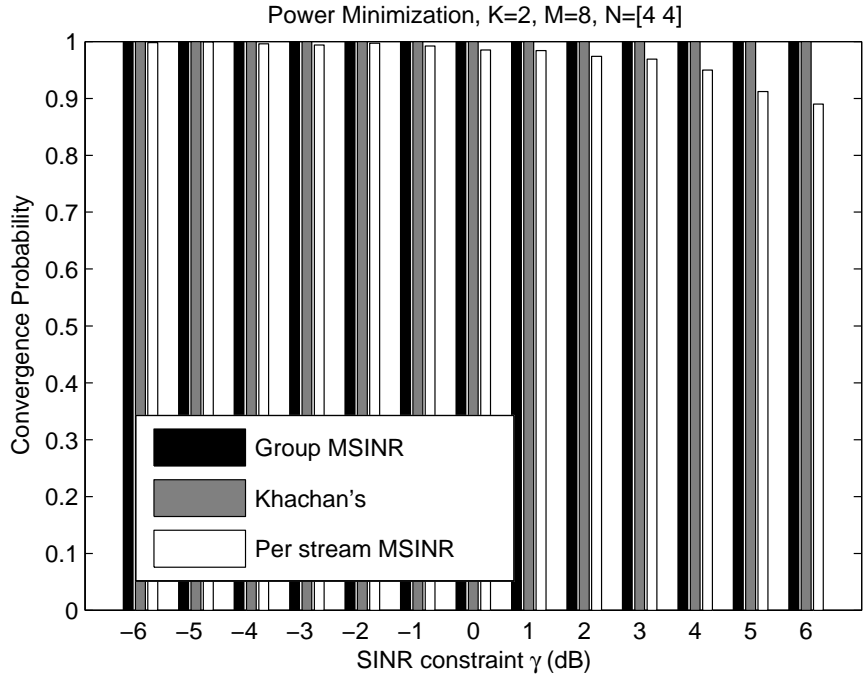}\\
  \caption{Comparison of the convergence probabilities for Problem Pp in Case 1 (fully loaded): $K=2$, $M=8$, $N_k=4, \;\forall k$.}
  \label{fig:ConvProb}
\end{figure}

\begin{figure}
  \centering
  \includegraphics[scale=0.9]{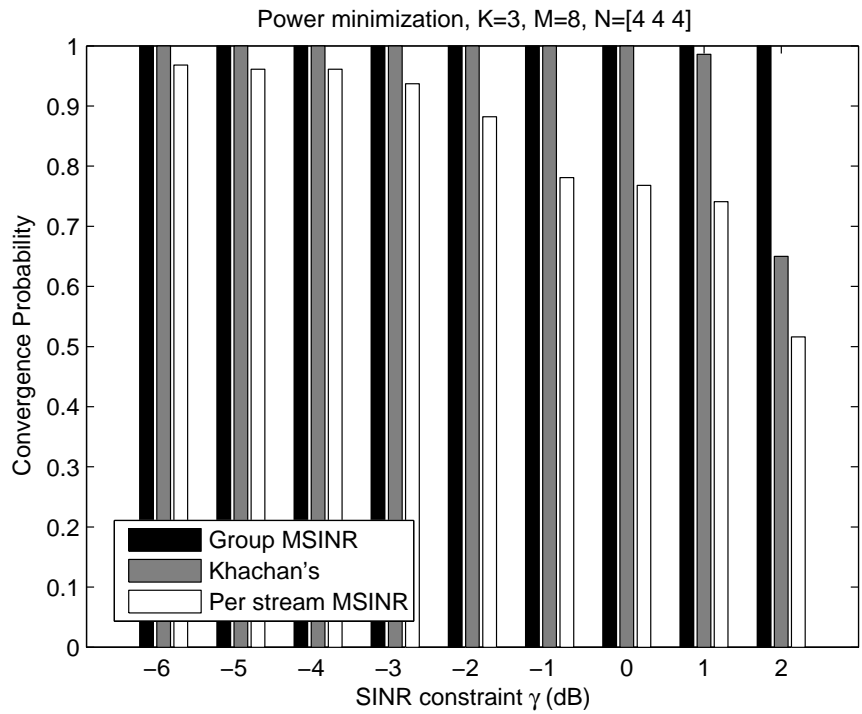}\\
  \caption{Comparison of the convergence probabilities for Problem Pp in Case 2 (over loaded): $K=3$, $M=8$, $N_k=4, \;\forall k$.}
  \label{fig:ConvProb_more_user}
\end{figure}

\end{document}